\let\hat\widehat
\newtheorem{thm}{Theorem}
\theoremstyle{remark}
\newtheorem{remark}{Remark}
\newcommand\R{\mathbb{R}}
\newcommand\E{\mathbb{E}}
\renewcommand\P{\mathbb{P}}
\newcommand\cE{{\cal E}}
\newskip\beforeproofvskip
\newskip\afterproofvskip
\def\prooftag{Proof}
\def\proofskip{\enspace}
\def\proof{\@ifnextchar[{\@@proof}{\@proof}}  
\def\@startproof{\par\vskip\beforeproofvskip\leavevmode}
\def\@proof{\@startproof{\scshape\prooftag.}\proofskip}
\def\@@proof[#1]{\@startproof {\scshape\prooftag #1.}\proofskip}
\let\hat\widehat
\let\tilde\widetilde
\begin{document}

\begin{frontmatter}

\title{On the use of bootstrap with variational inference: Theory, interpretation, and a two-sample test example}
\runtitle{Bootstrapping the Variational Inference}

\begin{aug}
  \author{\fnms{Yen-Chi}
    \snm{Chen}\ead[label=e1]{yenchic@uw.edu}},
  \author{\fnms{Y. Samuel}
    \snm{Wang}\ead[label=e2]{ysamwang@uw.edu}},
    \and
  \author{\fnms{Elena A.}
    \snm{Erosheva}\ead[label=e3]{erosheva@uw.edu}}
  \affiliation{Department of Statistics\\University of Washington}
  \runauthor{Y.-C. Chen et al.}
  \address{Department of Statistics\\University of Washington\\
Box 354322\\ Seattle, WA 98195 \\
          \printead{e1}}
    \address{Department of Statistics,\\ University of Washington\\
Box 354322\\ Seattle, WA 98195 \\
          \printead{e2}}
    \address{Department of Statistics,\\ School of Social Work,\\ and Center for Statistics and the Social Sciences\\University of Washington\\
Box 354322\\ Seattle, WA 98195 \\
          \printead{e3}}    
\end{aug}

\begin{abstract}
Variational inference is a general approach for approximating complex density functions, such as those arising in latent variable models, popular in machine learning. It has been applied to approximate the maximum likelihood estimator and to carry out
Bayesian inference, however, quantification of uncertainty with variational inference remains challenging from both theoretical and practical perspectives. This paper is concerned with developing uncertainty measures for variational inference by using bootstrap procedures. We first develop two general bootstrap approaches for assessing the uncertainty of a variational estimate
and the study the underlying bootstrap theory in both fixed- and increasing-dimension settings. We then use the bootstrap approach and our theoretical results in the context of mixed membership modeling with multivariate binary data on functional disability from the National Long Term Care Survey. We carry out a two-sample approach to test for changes in the repeated measures of functional disability for the subset of individuals present in 1989 and 1994 waves.
\end{abstract}
\begin{keyword}[class=MSC]
\kwd[Primary ]{62G09}
\kwd[; secondary ]{62G15, 62H99}
\end{keyword}
\begin{keyword}
 \kwd{variational inference}
 \kwd{bootstrap}
 \kwd{mixed membership model}
 \kwd{increasing dimension}
 \kwd{two-sample test}
\end{keyword}

\end{frontmatter}

\section{Introduction}

Variational inference \citep{jordan1999introduction,wainwright2008graphical} is a method to approximate
complex density functions \citep{blei2017variational} which has been applied to various statistical models such as
factor analysis \citep{ghahramani2000variational,khan2010variational,klami2015group},
stochastic block models \citep{celisse2012consistency,latouche2012variational, bickel2013asymptotic},
latent Dirichlet allocation \citep{blei2003latent,blei2006variational}, and
Gaussian processes \citep{damianou2011variational,damianou2014variational}.

Variational inference can be used to approximate
a posterior distribution as an alternative to Markov Chain Monte Carlo (MCMC), when a sampling procedure would be prohibitively slow or require immense human effort to tune,
or to approximate
a maximum likelihood estimator (MLE), when computation with the specified likelihood is intractable.
In particular, when the model involves a latent structure such
as a mixed membership model \citep{airoldi2005latent, airoldi2008mixed,wang2015variational}
or a mixed effect model \citep{hall2011theory, westling2015establishing},
finding the MLE may be very challenging
and variational inference provides
a fast way to obtain an estimate of the parameter.
The estimator from variational inference is called the variational estimator.





Recently,
the asymptotic distribution
of point estimates resulting from variational inference was investigated in
\cite{hall2011asymptotic,bickel2013asymptotic,westling2015establishing}
and \cite{wang2017frequentist} analyze variational inference under a Bayesian framework.
When a consistent estimator of the asymptotic variance is available, practitioners can analyze the uncertainty of the variational estimate and draw scientific conclusions
by constructing confidence intervals (CI) for the parameters of interest 
\citep{hall2011asymptotic, westling2015establishing}.

However, constructing a CI using the asymptotic distribution
fails if we do not have a consistent estimator of the variance of the variational estimator.
To overcome this problem,
we consider using bootstrap methods implemented in
\cite{bickel2013asymptotic} and \cite{wang2015variational}.
The bootstrap approach does not require a consistent variance estimator to be available,
and, in some cases, leads to a CI with a higher-order coverage \citep{hall2013bootstrap}.
Despite the fact that the bootstrap method has already been used with variational estimation~\citep{wang2015variational},
the underlying bootstrap theory for variational inference does not exist. 


%

In this paper,
we investigate the validity of using a bootstrap approach where variational inference is used to approximate an MLE.
We construct a confidence interval (CI) in the usual fixed dimensional case, where
both the dimensionality of the parameter and the number of latent variables are fixed,
as well as in the increasing dimensional case. An example of the latter situation may come from an item response theory model where the latent dimensionality may increase when the number of questions per individual is increasing.
\cite{haberman1977maximum} and \cite{douglas1997joint} have analyzed
a situation where the number of questions (dimension)
and the number of participants (sample size) increase jointly.


This paper has been motivated by the general need -- as opposed to one specific substantive problem or a specific application area -- to provide statisticians, computer scientists and data scientists with the theory and tools for using the bootstrap for variational inference. We use two sets of functional disability measures obtained five years apart from the National Long Term Care Survey (NLTCS) to illustrate the bootstrap approach on a two-sample test, a setting where we find the variational inference to be particularly appropriate. However, a complete development of a substantive application is beyond the scope of this paper. 


\paragraph{Outline}
We briefly review variational inference in Section~\ref{sec::VI}.
In Section~\ref{sec::BT}, we discuss how to apply the bootstrap
to variational inference.
We then develop asymptotic normality and bootstrap theory in
Section~\ref{sec::theory}.
In Section~\ref{sec::NLTCS}, we illustrate the bootstrap approach with a two-sample test using functional disability data from the NLTCS.
Finally, we discuss related topics and the link to Stephen E. Fienberg in Section~\ref{sec::discussion}.

\section{Variational Inference}	\label{sec::VI}

We consider the variational inference in the context of a latent variable model.
Assume our data consists of $n$ individuals and $J$ variables (e.g., survey questions or test problems)
and forms a random sample of $X_1,\cdots,X_n \in \R^J$ that are IID from a
distribution function $P_0$.
We assume that there exists $K$ latent features for each individual that are denoted as
$Z_1,\cdots,Z_n \in \R^K$.
This setup is quite general -- in a mixed membership model, $Z_i$
is the vector of mixed membership indicator;
in a random effect model, $Z_i $ is the random effect;
in a stochastic block model, $Z_i$ is the community indicator
(and $X_i = \{0,1\}^n$ denotes the edge connected to the $i$-th observation).

We assume a parametric model on the distribution function $P_0$
such that the joint distribution of $(X_i,Z_i)$
has a parametric form
$P(x,z;\theta)$, where $\theta\in\Theta\subset \R^d$ is the parameter of interest.
When the latent feature vector $Z_i$
is known, the likelihood of $i$-th observation is
$$
L(\theta|X_i,Z_i) = P(X_i, Z_i;\theta).
$$
Analogously to \cite{neyman1948consistent},
we regard the latent feature vectors $Z_1,\cdots,Z_n$ as
incidental parameters and the population parameter $\theta$ as structural parameters.


In reality, we do not know the latent vectors so the observed log-likelihood function is
\begin{equation}
\ell(\theta|\mathcal{X}) = \log L(\theta|X_i) =\log \int P(X_i,Z_i;\theta)dZ_i.
\label{eq::likelihood}
\end{equation}
Often, we are interested in using
the maximum likelihood estimator
$$
\hat{\theta}_{MLE} = \underset{\theta}{\sf argmax}\,\, \sum_{i=1}^n\ell(\theta|X_i).
$$

However, maximizing or even calculating the marginal likelihood can often be computationally intractable. Thus,
variational estimators provide an alternative, computationally feasible estimator
to the MLE.
The variational estimator is constructed as follows.
We first pick a family of distributions--the variational distribution family--for
the latent variable $Z_i$.
Let $Q(z;\omega)$ be the variational distribution family indexed by the variational parameter
$\omega\in\Omega \subset\R^s$, which is
a nuisance parameter in our model.
Note that we allow each $Z_i$ has its own variational parameter; namely, $\omega=\omega_i$. 
Using Jensen's inequality, the log-likelihood function satisfies
\begin{equation}
\begin{aligned}
\ell(\theta|X_i) &= \log\int P(X_i,Z_i;\theta)dZ_i\\
& = \log\int \frac{P(X_i,Z_i;\theta)}{Q(Z_i;\omega_i)}Q(Z_i;\omega_i)dZ_i\\
& = \log\E_{Z_i\sim Q}\left(\frac{P(X_i,Z_i;\theta)}{Q(Z_i;\omega_i)}|X_i\right)\\
& \geq \E_{Z_i\sim Q}\left(\log P(X_i,Z_i;\theta)|X_i\right) - \E_{Z_i\sim Q}\left(\log Q(Z_i;\omega_i)\right)\\
& = {\sf ELBO}(\theta,\omega_i|X_i),
\end{aligned}
\end{equation}
where $\E_{Z_i\sim Q_i}$ means that the expectation is taken over variable $Z_i$
and the underlying distribution is $Q(\cdot;\omega_i)$.
We call the expression
on the right hand side of the inequality the \emph{evidence lower bound} (ELBO).

Instead of maximizing the log-likelihood function, the variational framework
maximizes the ELBO, leading to
\begin{equation}
\hat{\theta}_{ELBO}, \hat{\omega}_{ELBO,1},\cdots,  \hat{\omega}_{ELBO,n} 
= \underset{\theta,\omega_1,\cdots,\omega_n}{\sf argmax} \,\, \sum_{i=1}^n {\sf ELBO}(\theta,\omega_i|X_i).
\label{eq::elbo0}
\end{equation}
Because $\omega_i$ in the above maximizing criterion is only involved in ${\sf ELBO}(\theta,\omega_i|X_i)$ when $\theta$ is fixed,
the first element $\hat{\theta}_{ELBO}$ is equivalent to the maximizer of the following criterion:
\begin{equation}
\begin{aligned}
\hat{\theta}_{ELBO}
&= \underset{\theta}{\sf argmax} \,\, \sum_{i=1}^n {\sf ELBO}(\theta,\omega_{\max}(\theta|X_i)|X_i)\\
&= \underset{\theta}{\sf argmax} \,\, \sum_{i=1}^n \cE(\theta|X_i),
\end{aligned}
\label{eq::elbo}
\end{equation}
where $\omega_{\max}(\theta|X_i) = \underset{\omega_i}{\sf argmax}\,\, {\sf ELBO}(\theta,\omega_i|X_i).$
The quantity $\hat{\theta}_{ELBO}$,
is called the ELBO estimator
or the variational estimator.

Because the ELBO estimator comes from optimizing $\sum_{i=1}^n \cE(\theta|X_i)$, it is an estimator of
\begin{equation}
\theta_{ELBO}= \underset{\theta}{\sf argmax} \,\, \E \left(\cE(\theta|X_1)\right).
\label{eq::elbo_pop}
\end{equation}
Note that the expectation in the above expression is for the random variable $X_1$
and is taken with respect to the data-generating distribution $P_0$.
The quantity $\theta_{ELBO}$ defines the population quantity
that the variational inference (ELBO estimator) is estimating.
Note that $\theta_{ELBO}$ depends on the variational distribution $Q$
and is often different from 
the population version of 
$\theta_{MLE} = \underset{\theta}{\sf argmax} \,\, \E \left(\ell(\theta|X_1)\right)$.
Thus,
variational inference can be thought of as an intentional model misspecification
even if the original parametric model is correctly specified.
We will argue in the next section that despite the misspecification, variational inference is still a useful procedure
for making statistical inference.


\begin{figure}[htb]
\includegraphics[height=1.5in]{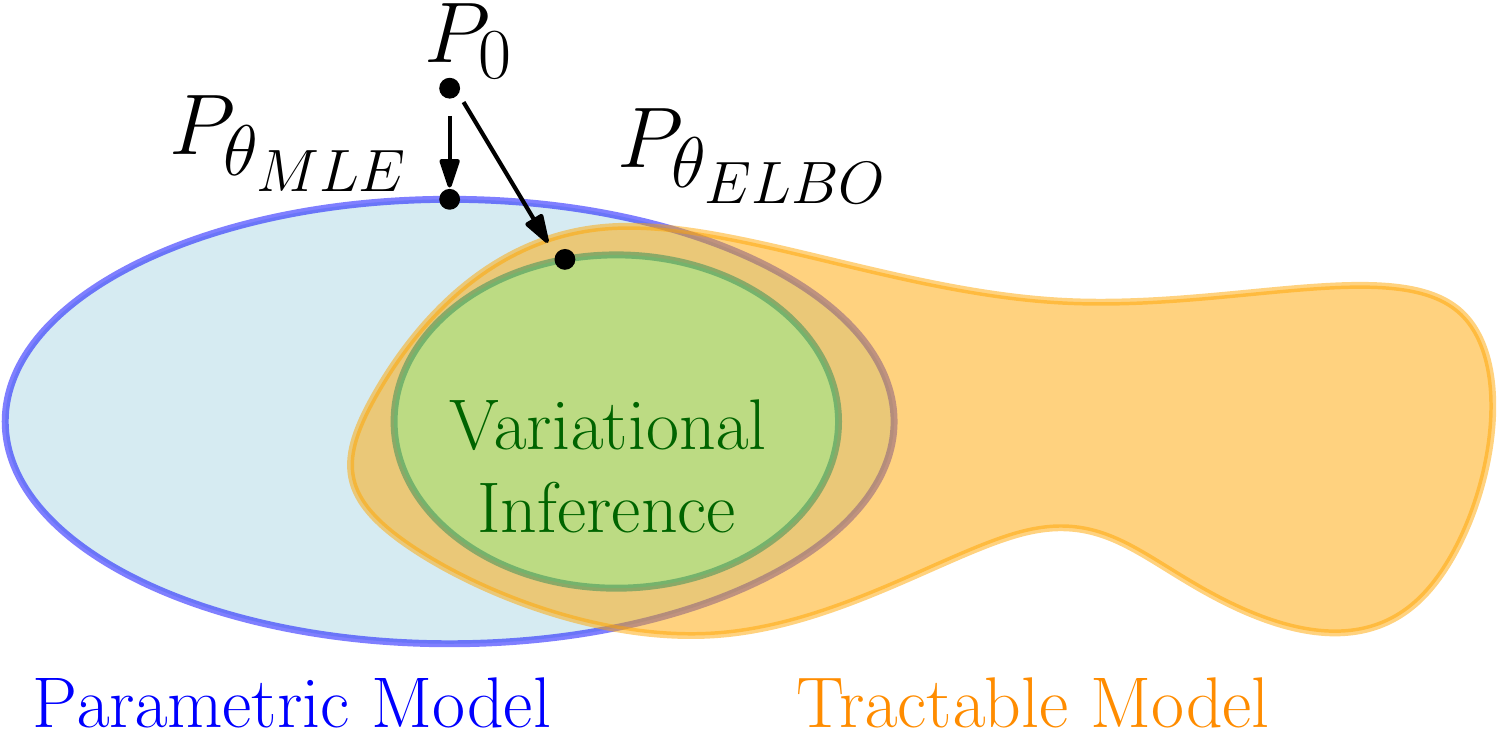}
\caption{An illustration for the relations of $P_0, P_{\theta_{MLE}},$ and $P_{\theta_{ELBO}}$
when a latent variable model is used and the model is not correctly specified.
In this case, the distribution corresponding to the population MLE
is just the distribution in the parametric family that minimizes the KL-divergence
to the true distribution function.
So $P_{\theta_{MLE}}$ can be viewed as a projection from $P_0$ onto the parametric family.
However, if the MLE is computationally intractable, we can still specify a tractable variational estimator and the corresponding variational distribution, $P_{\theta_{ELBO}}$,
can be viewed as another projection from $P_0$.
}
\label{fig::map}
\end{figure}


\begin{remark}
When the parametric model is correctly specified (i.e., there exists $\theta_0\in\Theta$ such that $P_0=P_{\theta_0}$),
the variational estimator may recover the correct model with $\theta_{ELBO} = \theta_0$
in some special cases.
For concrete examples, we refer the readers to
\cite{hall2011asymptotic}
and \cite{bickel2013asymptotic}
where they illustrated this possibility in
a single predictor Poisson mixed model
and a stochastic block model.
\end{remark}

\subsection{Further considerations for using variational inference in practice}\label{sec:furtherConsid}

As described in the previous section,
the distribution based on the variational estimator $P_{\hat{\theta}_{ELBO}}$
may not converge to the true data-generating distribution function even
when the model is correctly specified.
%
Despite this drawback, variational inference can be a useful procedure
for inference for the following reasons.

%

\begin{itemize}
\item {\bf Likelihood formulation as a working model.}
As George Box has said \emph{``Essentially, all models are wrong, but some are useful"}
\citep{doi:10.1080/01621459.1976.10480949}.
A proposed model is almost always misspecified.
When using a parametric model to analyze the data,
we do not claim that the parametric model describes the actual data-generating distribution.
Instead,
a working model and parameter estimates help us to learn various aspects about the data at hand.
To carry this reasoning further, the ML procedure and the variational inference procedure
are just different principles of fitting parameters to the data.
When the model is misspecified,
both the MLE and the variational estimator
are best approximation estimators under different criteria
of measuring the quality of approximation.
Figure~\ref{fig::map}
provides a diagram illustrating the case where the likelihood model
does not include the true data-generating distribution.


\item {\bf Two-sample test.}
The variational inference procedure
is a useful procedure for two-sample test.
Given two sets of data $X_1,\cdots,X_n\sim P_X$
and $Y_1,\cdots, Y_m \sim P_Y$,
the goal of a two-sample test is to test
$$
H_0: P_X = P_Y.
$$
Using the equation \eqref{eq::elbo_pop},
$H_0$ implies that
$$
\theta_{ELBO,X} = \theta_{ELBO,Y},
$$
where $\theta_{ELBO,X}$ and $\theta_{ELBO,Y}$ are the maximizer of equation \eqref{eq::elbo_pop}
assuming the expectation is taken over $P_X$ and $P_Y$.
Thus, when applied to a two-sample test,
variational inference is as valid of an approach as ML inference.
In a sense, one can interpret the tests following either approach, variational or ML, as inferences based on different projections of the distributions $P_X,P_Y$ onto the same parameter space.

\end{itemize}


\section{Bootstrapping the variational estimator}	\label{sec::BT}


We use the bootstrap \citep{efron1982jackknife,efron1992bootstrap} to
evaluate the uncertainty of the variational estimator and construct CIs.
We focus on the empirical bootstrap -- also known as classical, nonparametric, or Efron's bootstrap --  where
one samples with replacement from the original dataset, recomputes the ELBO estimator for each bootstrap sample,
and uses the distribution of these bootstrapped ELBO estimators to derive uncertainty measures.
We illustrate estimation of the error of $\hat{\theta}_{ELBO}$
and construction of the CIs using the bootstrap.
There are many bootstrap CIs (see, e.g., \citealt{hall2013bootstrap}).
Here, we will focus on two common approaches:
the percentile method and the (studentized) pivotal method.
Note that constructing a CI using the percentile approach has been implemented in \cite{wang2015variational}.

The bootstrap approach to estimating uncertainty is very general.
The bootstrap percentile approach can be used even when the asymptotic covariance matrix is not available (e.g., difficult to estimate).
When the asymptotic covariance matrix of $\hat{\theta}_{ELBO}$ is known and
can be consistently estimated (say using a sandwich estimator),
the bootstrap pivotal method 
may produce CIs with a higher order coverage than those based on the asymptotic normality
\citep{horowitz1997bootstrap,singh1981asymptotic,babu1983inference}.

More formally, let
$X^*_1,\cdots,X^*_n$ be a bootstrap sample from the original sample $\mathcal{X} = \{X_1,\cdots,X_n\}$.
Given the bootstrap sample, we compute the bootstrap ELBO estimator
\begin{equation}
\hat{\theta}^*_{ELBO}= \underset{\theta}{\sf argmax} \,\, \sum_{i=1}^n \cE(\theta|X^*_i).
\label{eq::elbo_bt}
\end{equation}
Repeating the bootstrap procedure $B$ times, we obtain $B$ bootstrap ELBO estimators:
$$
\hat{\theta}^{*(1)}_{ELBO},\cdots,\hat{\theta}^{*(B)}_{ELBO}.
$$
We will use these bootstrap ELBO estimators
to assess the uncertainty of the original ELBO estimator.


Note that one may also use the
jackknife and weighted bootstrap \citep{o2015estimation}
to generate bootstrap sample. 
In particular, when analyzing a network data where each $X_i$
corresponds to the edges of $i$-the vertex,
the empirical bootstrap cannot be applied
but the weighted bootstrap (and the parametric bootstrap; see Remark~\ref{rm::parametric})
is still applicable.

\subsection{Estimating the variance}
The bootstrap approach can be applied to estimate the variance of
the ELBO estimator.
Assume that we focus on the $\ell$-th parameter $\theta_\ell$.
The variance of $\hat{\theta}_{ELBO,\ell}$ can be estimated using
the sample variance of the bootstrapped variational estimators
\begin{equation}
\hat{\sf Var}(\hat{\theta}_{ELBO, \ell}) = \frac{1}{B}\sum_{j=1}^B\left(\hat{\theta}^{*(j)}_{ELBO,\ell}-\bar{\theta}^{*}_{ELBO,\ell}\right)^2, \quad
\bar{\theta}^{*}_{ELBO,\ell} = \frac{1}{B}\sum_{j=1}^B \hat{\theta}^{*(j)}_{ELBO,\ell}.
\label{eq::var_bt}
\end{equation}
Figure~\ref{fig::alg::var} provides a diagram summarizing the procedure.

The intuition behind equation \eqref{eq::var_bt} is that
the bootstrap distribution of the estimators $\hat{\theta}^{*(1)}_{ELBO},\cdots,\hat{\theta}^{*(B)}_{ELBO}$
behaves as if new realizations of the original estimator $\hat{\theta}_{ELBO}$ are drawn.
Thus, the variance of the bootstrap estimators would be an approximation to
the variance of $\hat{\theta}_{ELBO}$.

\begin{figure}[htb]
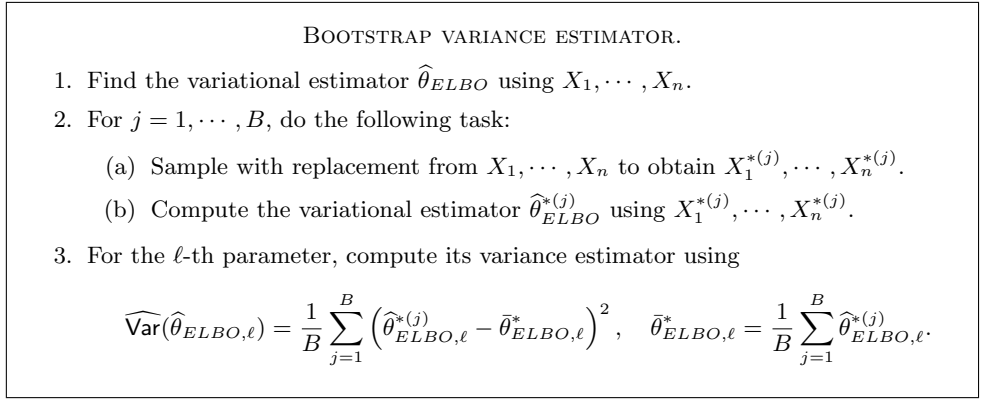

\fbox{\parbox{5in}{
\begin{center}
{\sc Bootstrap variance estimator.}
\end{center}
\begin{center}
\begin{enumerate}
\item Find the variational estimator $\hat{\theta}_{ELBO}$ using $X_1,\cdots,X_n$.
\item For $j=1,\cdots, B$, do the following task:
	\begin{enumerate}
	\item Sample with replacement from $X_1,\cdots,X_n$ to obtain
	$X^{*(j)}_1,\cdots,X^{*(j)}_n$.
	\item Compute the variational estimator $\hat{\theta}^{*(j)}_{ELBO}$ using $X^{*(j)}_1,\cdots,X^{*(j)}_n$.
	\end{enumerate}
\item For the $\ell$-th parameter,
compute its variance estimator using
$$
\hat{\sf Var}(\hat{\theta}_{ELBO, \ell}) = \frac{1}{B}\sum_{j=1}^B\left(\hat{\theta}^{*(j)}_{ELBO,\ell}-\bar{\theta}^{*}_{ELBO,\ell}\right)^2, \quad
\bar{\theta}^{*}_{ELBO,\ell} = \frac{1}{B}\sum_{j=1}^B \hat{\theta}^{*(j)}_{ELBO,\ell}.
$$
\end{enumerate}
\end{center}
}}
\caption{Bootstrap variance estimator.}
\label{fig::alg::var}
\end{figure}

\subsection{Confidence interval: percentile approach}

The bootstrap approach enables us to construct CIs for the parameters of interest.
We first introduce a simple approach called the percentile (quantile) approach,
which is based on the percentile of the distribution of the bootstrap variational estimators.
Assume again we focus on the $\ell$-th parameter.
Given a confidence level $\alpha$, let
$\hat{s}_{\ell,\alpha}$ denotes the $\alpha$-quantile of the bootstrap ELBO estimators
$$
\hat{s}_{\ell,\alpha} =  \hat{G}_\ell^{-1}(1-\alpha), \quad \hat{G}_\ell(t) = \frac{1}{B}\sum_{j=1}^B I(\hat{\theta}^{*(j)}_{ELBO, \ell}-\hat{\theta}_{ELBO,\ell}\leq t).
$$
Then a $(1-\alpha)$ CI of the $\ell$-th parameter is
\begin{equation}
C_{n,\alpha,\ell} = \left[\hat{\theta}_{ELBO,\ell}+\hat{s}_{\ell,\alpha/2},\,\, \hat{\theta}_{ELBO,\ell}+\hat{s}_{\ell,1-\alpha/2}\right].
\label{eq::CI_bt}
\end{equation}
Figure~\ref{fig::alg::PA} summarizes the steps in computing a bootstrap percentile CI.

Equation \eqref{eq::CI_bt} presents a CI that uses the percentile of the bootstrap distribution
of the ELBO estimator.
This CI is based on the following approximation:
\begin{equation}
P\left(\hat{\theta}^*_{ELBO,\ell}- \hat{\theta}_{ELBO,\ell}<t|X_1,\cdots,X_n\right) \approx P\left(\hat{\theta}_{ELBO,\ell}- \theta_{ELBO,\ell}<t\right).
\label{eq::CI_bt2}
\end{equation}
Namely,
the CDF of the difference between ELBO estimator and the truth $\hat{\theta}_{ELBO,\ell}- \theta_{ELBO,\ell}$
can be approximated by the CDF of the bootstrapped differences.
Thus, $\hat{G}_\ell$ approximates the distribution of the actual difference
and we use it to construct a $(1-\alpha)$ CI.
We will show
the validity of equation \eqref{eq::CI_bt2}
in Theorem~\ref{thm::fixed} and Theorem~\ref{thm::increase}.

\begin{figure}[htb]
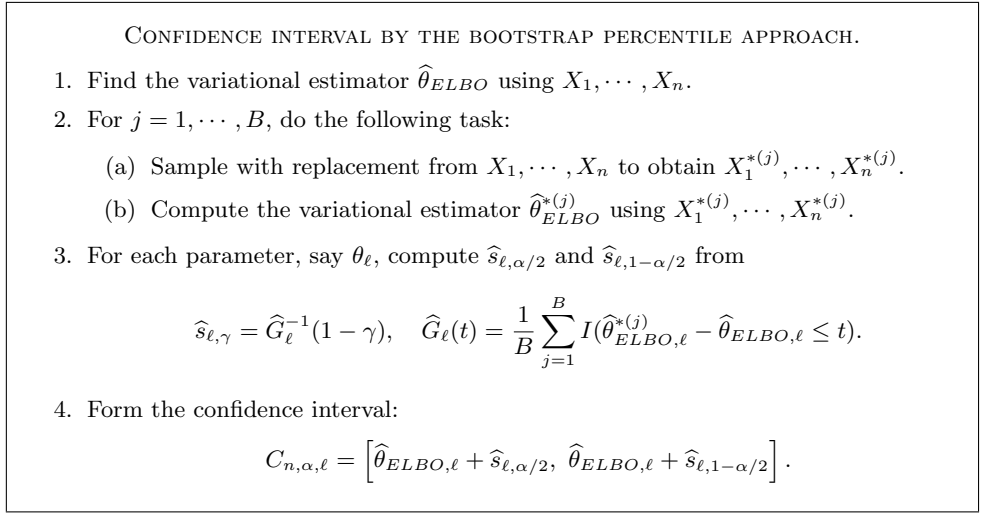

\fbox{\parbox{5in}{
\begin{center}
{\sc Confidence interval by the bootstrap percentile approach.}
\end{center}
\begin{center}
\begin{enumerate}
\item Find the variational estimator $\hat{\theta}_{ELBO}$ using $X_1,\cdots,X_n$.
\item For $j=1,\cdots, B$, do the following task:
	\begin{enumerate}
	\item Sample with replacement from $X_1,\cdots,X_n$ to obtain
	$X^{*(j)}_1,\cdots,X^{*(j)}_n$.
	\item Compute the variational estimator $\hat{\theta}^{*(j)}_{ELBO}$ using $X^{*(j)}_1,\cdots,X^{*(j)}_n$.
	\end{enumerate}
\item For each parameter, say $\theta_\ell$, compute $\hat{s}_{\ell,\alpha/2}$ and $\hat{s}_{\ell,1-\alpha/2}$
from
$$
\hat{s}_{\ell,\gamma} =  \hat{G}_\ell^{-1}(1-\gamma), \quad \hat{G}_\ell(t) = \frac{1}{B}\sum_{j=1}^B I(\hat{\theta}^{*(j)}_{ELBO, \ell}-\hat{\theta}_{ELBO,\ell}\leq t).
$$
\item Form the confidence interval:
$$
C_{n,\alpha,\ell} = \left[\hat{\theta}_{ELBO,\ell}+\hat{s}_{\ell,\alpha/2},\,\, \hat{\theta}_{ELBO,\ell}+\hat{s}_{\ell,1-\alpha/2}\right].
$$
\end{enumerate}
\end{center}
}}
\caption{Confidence interval by the bootstrap percentile approach.}
\label{fig::alg::PA}
\end{figure}

\subsection{Confidence interval: the pivotal approach}	\label{sec::pivotal}

The (studentized) pivotal approach \citep{wasserman2006all},
also called a percentile-t approach \citep{hall2013bootstrap},
is another popular method for constructing a CI
and may lead to a CI with a higher-order correctness \citep{hall2013bootstrap}.

The pivotal approach requires a consistent estimator of the variance of $\hat{\theta}_{ELBO,\ell}$.
Let $\hat{\sigma}^2_{ELBO,\ell}$ be such a consistent estimator.
Note that $\hat{\sigma}^2_{ELBO,\ell}$ can be constructed using a sandwich estimator
as is described in \cite{hall2011asymptotic} and \cite{westling2015establishing}.
Then the statistic
$$
T_n = \frac{\hat{\theta}_{ELBO,\ell}-\theta_{ELBO,\ell}}{\hat{\sigma}_{ELBO,\ell}}
$$
acts as a t-statistic and converges to a standard normal distribution (see, e.g., equation \eqref{eq::elbo}).
Therefore, $T_n$ is a pivotal quantity that has asymptotic normality and
the pivotal approach is based on bootstrapping $T_n$
to construct a CI.

Instead of using the percentile from a standard normal distribution,
we use the bootstrap percentile of $T_n$.
For the $j$-th bootstrap sample,
we not only compute the bootstrap parameter estimate $\hat{\theta}^{*(j)}_{ELBO,\ell}$
but also re-compute the corresponding variance estimator $\hat{\sigma}^{*2(j)}_{ELBO,\ell}$
to evaluate the bootstrap version of the pivotal statistics
$$
T^{*(j)}_n = \frac{\hat{\theta}^{*(j)}_{ELBO,\ell}-\hat{\theta}_{ELBO,\ell}}{\hat{\sigma}^{*(j)}_{ELBO,\ell}}, \quad j=1,\cdots,B.
$$
We then pick the value $\hat{t}_{\ell,1-\alpha/2}$ as the $(1-\alpha/2)$ upper quantile of the empirical distribution function
of $|T^{*(1)}_n|,\cdots,|T^{*(B)}_n|$, i.e.,
$$
\hat{t}_{\ell,1-\alpha/2} = \hat{F}^{-1}_\ell(1-\alpha/2),\quad \hat{F}_\ell(t) = \frac{1}{B}\sum_{j=1}^B I\left(|T^{*(j)}_n|\leq t\right).
$$
The $(1-\alpha)$ CI is
\begin{equation}
C^\dagger_{n,\alpha,\ell} =
\left[\hat{\theta}_{ELBO,\ell}-\hat{\sigma}_{ELBO,\ell}\cdot\hat{t}_{\ell,1-\alpha/2},\quad \hat{\theta}_{ELBO,\ell}+\hat{\sigma}_{ELBO,\ell}\cdot\hat{t}_{\ell,1-\alpha/2}\right].
\label{eq::pivot}
\end{equation}
Note that $\hat{\sigma}_{ELBO,\ell}$ is the estimator of the variance of $\hat{\theta}_{ELBO,\ell}$
using the original sample.
Figure~\ref{fig::alg::pivot} provides a summary of the bootstrap pivotal approach
for constructing a CI.

The intuition of the bootstrap studentized pivotal approach is that
the distribution of bootstrap statistic $T^*_n$ (given $X_1,\cdots,X_n$)
converges to the distribution of $T_n$ faster than
the convergence of $T_n$ to a standard normal distribution. Thus, the CI in equation \eqref{eq::pivot}
has a higher order correctness \citep{singh1981asymptotic,babu1983inference,hall2013bootstrap}.


\begin{figure}[htb]
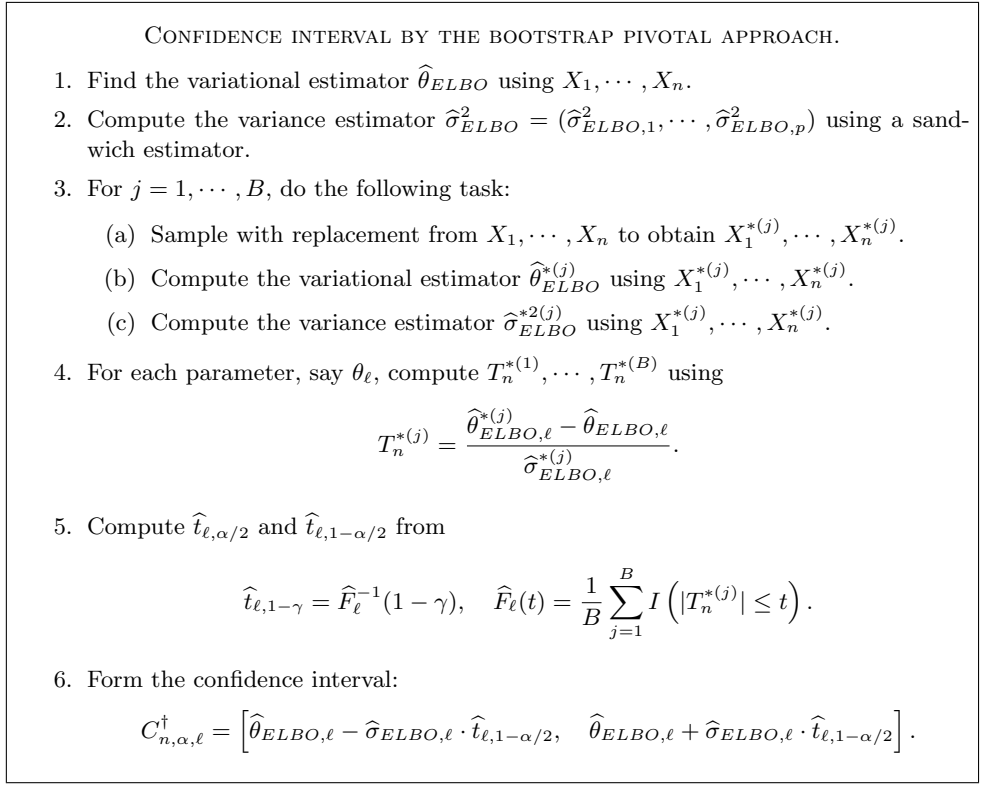

\fbox{\parbox{5in}{
\begin{center}
{\sc Confidence interval by the bootstrap pivotal approach.}
\end{center}
\begin{center}
\begin{enumerate}
\item Find the variational estimator $\hat{\theta}_{ELBO}$ using $X_1,\cdots,X_n$.
\item Compute the variance estimator $\hat{\sigma}^2_{ELBO} = (\hat{\sigma}^2_{ELBO,1}, \cdots, \hat{\sigma}^2_{ELBO,p})$
using a sandwich estimator.
\item For $j=1,\cdots, B$, do the following task:
	\begin{enumerate}
	\item Sample with replacement from $X_1,\cdots,X_n$ to obtain
	$X^{*(j)}_1,\cdots,X^{*(j)}_n$.
	\item Compute the variational estimator $\hat{\theta}^{*(j)}_{ELBO}$ using $X^{*(j)}_1,\cdots,X^{*(j)}_n$.
	\item Compute the variance estimator $\hat{\sigma}^{*2(j)}_{ELBO} $ using $X^{*(j)}_1,\cdots,X^{*(j)}_n$.
	\end{enumerate}
\item For each parameter, say $\theta_\ell$, compute $T^{*(1)}_n,\cdots,T^{*(B)}_n$
using
$$
T^{*(j)}_n = \frac{\hat{\theta}^{*(j)}_{ELBO,\ell}-\hat{\theta}_{ELBO,\ell}}{\hat{\sigma}^{*(j)}_{ELBO,\ell}}.
$$

\item Compute $\hat{t}_{\ell,\alpha/2}$ and $\hat{t}_{\ell,1-\alpha/2}$
from
$$
\hat{t}_{\ell,1-\gamma} = \hat{F}^{-1}_\ell(1-\gamma),\quad \hat{F}_\ell(t) = \frac{1}{B}\sum_{j=1}^B I\left(|T^{*(j)}_n|\leq t\right).
$$
\item Form the confidence interval:
$$
C^\dagger_{n,\alpha,\ell} =
\left[\hat{\theta}_{ELBO,\ell}-\hat{\sigma}_{ELBO,\ell}\cdot\hat{t}_{\ell,1-\alpha/2},\quad \hat{\theta}_{ELBO,\ell}+\hat{\sigma}_{ELBO,\ell}\cdot\hat{t}_{\ell,1-\alpha/2}\right].
$$
\end{enumerate}
\end{center}
}}
\caption{Confidence interval by the bootstrap pivotal approach.}
\label{fig::alg::pivot}
\end{figure}

\begin{remark}[Parametric bootstrap]		\label{rm::parametric}
In addition to the above bootstrap methods,
the parametric bootstrap is another popular approach which generates bootstrap samples from $P_{\hat{\theta}_{ELBO}}$ instead of the empirical distribution function.
However, we caution against using the parametric bootstrap.
When using the variational estimator,
the parametric bootstrap may not give a CI with the (asymptotic) nominal coverage even if the parametric family is correct (i.e., there exists $\theta_0\in\Theta$ such that
the data generating distribution $P = P_{\theta_0}$) because
the ELBO estimator $\hat{\theta}_{ELBO}$ does not converge to $\theta_0$ in general.
Thus, $P_{\hat{\theta}_{ELBO}}$ will not be close to $P_{\theta_0}$,
so there is no guarantee that the CI will have nominal coverage.
However, when the model is correctly specified and $\hat{\theta}_{ELBO}$ does converge to $\theta_0$
(this may occur when we allow $s$ to increase; see Remark~\ref{rm::s}),
the parametric bootstrap can provide CIs with nominal coverage;
see \cite{bickel2013asymptotic} for an example in the case of stochastic block model.
\end{remark}

\begin{remark}[Label Switching Problem]
In some models, the MLE may only be unique up to permutation of indices (see, e.g., the 
example in Section~\ref{sec::NLTCS}). 
In this case, 
the ELBO is non-convex so we need to use a gradient ascent method such as the EM algorithm. 
For each bootstrap sample, we will apply the EM algorithm
with the same initialization (we recommend to use
the estimator of the original sample as the initial point for each bootstrap sample). 
This will avoid the problem of 
label switching \citep{redner1984mixture}
and the bootstrap will recover the uncertainty in parameter estimation. 

\end{remark}

\section{Asymptotic Distribution and Bootstrap Consistency}	\label{sec::theory}

In this section, we derive the asymptotic distribution of the variational estimator
and its bootstrap theory.
We will
study the theory in both
scenarios: fixing and increasing $d$, the dimension of parameters $\theta$, after introducing further notation.

Let $B(x,r)$ be a ball with radius $r$ centered at $x$.
We define $\Psi(\theta) = \E ({\sf \cE}(\theta|X_1))$,
and let $\Psi_\theta = \nabla \Psi$ and $\Psi_{\theta \theta} = \nabla\nabla \Psi$ to be the gradient and Hessian matrix of $\Psi$, respectively.
For a unit vector $b\in \R^d$ and a function $f:\R^d\mapsto \R$,
$\nabla_b f = b^T \nabla f$ is the derivative of $f$ in the direction of $b$.
For a matrix $A\in\R^{d\times d}$, we denote $\lambda_{\max}(A)$ and $\lambda_{\min}(A)$ to be the largest and
the smallest eigenvalues of $A$,
respectively.

\subsection{Fixed Dimension}

When the dimension $d$ is fixed,
the ELBO estimator and its target can be analyzed using the theory of M-estimators \citep{van1998asymptotic}.
The asymptotic normality of $\hat{\theta}_{ELBO}-\theta_{ELBO}$ has been
analyzed in the literature \citep{hall2011asymptotic,bickel2013asymptotic,westling2015establishing,wang2017frequentist}
under several scenarios.
Here we present the asymptotic normality using the result stated in \cite{westling2015establishing}
because they also considered frequentist estimation in the general context of latent variable models.  
\begin{thm}[Theorem~2 in \cite{westling2015establishing}]
Assume conditions (B1-B5) in the appendix of \cite{westling2015establishing}.
Then
\begin{equation}
\sqrt{n}\left(\hat{\theta}_{ELBO}-\theta_{ELBO}\right) \overset{D}{\rightarrow} N(0, V(P_0,\theta_{ELBO})),
\label{eq::elboAsymp}
\end{equation}
where $V(P_0,\theta_{ELBO}) = A(P_0,\theta_{ELBO})^{-1}B(P_0,\theta_{ELBO})A(P_0,\theta_{ELBO})$ is a $p\times p$ matrix
such that
\begin{align*}
A(P_0,\theta_{ELBO}) &= \E_{X\sim P_0}(\Psi_{\theta\theta}(\theta_{ELBO}|X)),\\
B(P_0,\theta_{ELBO}) &= \E_{X\sim P_0}(\Psi_{\theta}(\theta_{ELBO}|X)\Psi_{\theta}(\theta_{ELBO}|X)^T).
\end{align*}
\label{thm::fixed_asymptotics}
\end{thm}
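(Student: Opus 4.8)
The plan is to recognize Theorem~\ref{thm::fixed_asymptotics} as an instance of classical M-estimation theory applied to the profiled criterion $\cE(\theta|X)$, and to follow the standard consistency-then-linearization route. Write $M_n(\theta) = \frac{1}{n}\sum_{i=1}^n \cE(\theta|X_i)$ for the empirical objective, so that $\hat\theta_{ELBO} = \argmax_\theta M_n(\theta)$ and, by the law of large numbers, $M_n(\theta)\overset{P}{\rightarrow}\Psi(\theta) = \E(\cE(\theta|X_1))$ pointwise, with $\theta_{ELBO} = \argmax_\theta \Psi(\theta)$. The conditions (B1--B5) of \cite{westling2015establishing} are exactly what is needed to upgrade this pointwise convergence to the uniform convergence and identifiability required below.

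First I would establish consistency, $\hat\theta_{ELBO}\overset{P}{\rightarrow}\theta_{ELBO}$. This follows from the argmax theorem once one has (i) a well-separated population maximum, i.e.\ $\Psi(\theta_{ELBO}) > \sup_{\theta:\,\norm{\theta-\theta_{ELBO}}\ge\epsilon}\Psi(\theta)$ for every $\epsilon>0$, and (ii) $\sup_\theta |M_n(\theta) - \Psi(\theta)|\overset{P}{\rightarrow}0$. The uniform convergence in (ii) comes from a Glivenko--Cantelli / uniform law-of-large-numbers argument using the envelope and continuity conditions in (B1--B5), while (i) is the identifiability guaranteed by negative-definiteness of the population Hessian at $\theta_{ELBO}$.

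Next I would linearize the estimating equation. Because $\theta_{ELBO}$ is an interior maximizer and $\cE$ is smooth in $\theta$, with probability tending to one $\hat\theta_{ELBO}$ solves $\sum_{i=1}^n \Psi_\theta(\hat\theta_{ELBO}|X_i) = 0$. A first-order Taylor expansion of this score equation about $\theta_{ELBO}$ gives
\[
\sqrt{n}\,(\hat\theta_{ELBO} - \theta_{ELBO}) = -\left(\frac{1}{n}\sum_{i=1}^n \Psi_{\theta\theta}(\tilde\theta|X_i)\right)^{-1}\frac{1}{\sqrt{n}}\sum_{i=1}^n \Psi_\theta(\theta_{ELBO}|X_i),
\]
for an intermediate point $\tilde\theta$ between $\hat\theta_{ELBO}$ and $\theta_{ELBO}$. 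Two ingredients then finish the proof. Since $\theta_{ELBO}$ maximizes $\Psi$, the first-order condition $\E(\Psi_\theta(\theta_{ELBO}|X_1)) = \nabla\Psi(\theta_{ELBO}) = 0$ shows the score summands are i.i.d.\ and mean-zero, so the multivariate CLT yields $\frac{1}{\sqrt{n}}\sum_i \Psi_\theta(\theta_{ELBO}|X_i)\overset{D}{\rightarrow}N(0,B)$ with $B = \E(\Psi_\theta\Psi_\theta^T)$. Meanwhile, consistency forces $\tilde\theta\overset{P}{\rightarrow}\theta_{ELBO}$, and a uniform law of large numbers for the Hessian together with continuity gives $\frac{1}{n}\sum_i \Psi_{\theta\theta}(\tilde\theta|X_i)\overset{P}{\rightarrow}A = \E(\Psi_{\theta\theta})$, which is invertible by the nondegeneracy assumption. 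Slutsky's theorem then delivers the sandwich limit $N(0, A^{-1}BA^{-1})$ asserted in \eqref{eq::elboAsymp}.

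The step I expect to be the crux---and the reason the regularity conditions are nontrivial---is justifying the differentiability and the score/Hessian formulas for the \emph{profiled} objective $\cE(\theta|X) = {\sf ELBO}(\theta,\omega_{\max}(\theta|X)|X)$, in which the variational parameter has been maximized out. One must argue that $\omega_{\max}(\theta|X)$ is a well-defined, smooth interior maximizer of the inner ELBO (an implicit-function and uniqueness argument), so that an envelope-theorem identity makes $\Psi_\theta(\theta|X) = \nabla_\theta {\sf ELBO}(\theta,\omega|X)|_{\omega=\omega_{\max}(\theta|X)}$ and supplies the domination needed for the CLT and the uniform law of large numbers. This profiling is precisely what distinguishes the variational setting from textbook M-estimation and is what conditions (B1--B5) are designed to control; once it is in place, the remainder is the routine argument above.
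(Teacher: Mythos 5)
First, a point of calibration: the paper does not prove this statement at all --- it is imported verbatim as Theorem~2 of \cite{westling2015establishing}, and the route there (as the paper's surrounding text indicates) is to apply Theorem~5.23 of \cite{van1998asymptotic} to the profiled criterion $\cE(\theta|x)$. That theorem is the Lipschitz/empirical-process version of M-estimator asymptotics: condition (B4) (local Lipschitz continuity with $L_2(P_0)$ envelopes) makes the local criterion differences a Donsker class, stochastic equicontinuity controls the remainder, and only the \emph{population} map $\Psi$ needs a second-order expansion at $\theta_{ELBO}$. Your proof is instead the classical Cram\'er-style argument: score equation, Taylor expansion with an intermediate point, uniform LLN for the Hessian, CLT for the score, Slutsky. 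Under (B3) and (B5) (twice continuous differentiability near $\theta_{ELBO}$ with an integrable dominating function $\kappa$ for the Hessian) that route is also workable, and you correctly identify the genuinely variational-specific issue --- differentiability of the profiled objective and the envelope identity for $\Psi_\theta$ --- as the place where (B1)--(B4) do their work. Incidentally, your sandwich $A^{-1}BA^{-1}$ is the standard (and correct) form; the $A^{-1}BA$ in the statement is a typo.

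The genuine gap is your consistency step. Conditions (B1)--(B5) are purely local --- they constrain $\cE$, ${\sf ELBO}$, and $\omega_{\max}$ only on a neighborhood $B(\theta_{ELBO},r_1)$ --- so they cannot deliver either of your two ingredients: global uniform convergence $\sup_{\theta\in\Theta}|M_n(\theta)-\Psi(\theta)|\overset{P}{\rightarrow}0$, or a globally well-separated maximum. In particular, negative definiteness of $\Psi_{\theta\theta}(\theta_{ELBO})$ only isolates $\theta_{ELBO}$ locally; it says nothing about near-maximizers far from $\theta_{ELBO}$, which is exactly what the argmax theorem must rule out. In \cite{westling2015establishing}, as in Theorem~5.23 of \cite{van1998asymptotic}, consistency $\hat{\theta}_{ELBO}\overset{P}{\rightarrow}\theta_{ELBO}$ is not derived from (B1)--(B5); it is a separate hypothesis, established under distinct global conditions in their consistency theorem. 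Your proof should either take consistency as an input, as the source does, or add the global identifiability and uniform-convergence conditions it actually requires; as written, the first step does not follow from the stated assumptions, while the linearization that follows it is sound.
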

We include the assumptions (B1-5) in the appendix of \cite{westling2015establishing} in appendix \ref{sec::B15}.
These assumptions are made to derive the asymptotic normality of
an $M$-estimator (see, e.g., Theorem 5.23 of \citealt{van1998asymptotic}). 
Essentially, these assumptions assure that $\omega_{\max}(\theta|x),{\sf ELBO}(\theta,\omega|x)$, and $\cE(\theta|x)$ are
well-defined and sufficiently smooth
and well-behaved around $\theta_{ELBO}$ and $P_0$-a.e. $x$.
Viewing the ELBO estimator as the MLE, the quantity $\Psi_\theta(\cdot)$
and $V(P,\theta_{ELBO})$
are analogous to the score function and the Fisher information matrix, respectively.

To describe the validity of a bootstrap procedure,
we often use the notion of
convergence under Kolmogorov distance \citep{van1998asymptotic}.
For two random variables $A$ and $B$, their Kolmogorov distance is
$$
\sup_t \left|P\left(A<t\right)-P\left(B<t\right)\right|.
$$
The bound on Kolmogorov distance is also called the Berry-Esseen bound \citep{berry1941accuracy,esseen1942liapounoff}.
Note that convergence in probability in Kolmogorov distance is a stronger
result, compared to convergence in distribution. Namely,
if a sequence of random variables $A_1,\cdots,A_n,\cdots$ with $d_K(A_n,A_0)\overset{P}{\rightarrow}0$,
then $A_n\overset{D}{\rightarrow} A_0$.

Let $\Delta_n = \sqrt{n}\left(\hat{\theta}_{ELBO}-{\theta}_{ELBO}\right)$ and
$\Delta_n^* = \sqrt{n}\left(\hat{\theta}^*_{ELBO}-\hat{\theta}_{ELBO}\right)$ be the scaled difference
and the bootstrap version of it.
We will prove that $\Delta_n$ and $\Delta_n^*$ converge in Kolmogorov distance.

\begin{thm}
Assume conditions (B1-5) in the appendix of \cite{westling2015establishing}
and $\E\|\Psi_\theta(\theta_{ELBO}|X_1)\|^3<\infty$.
Then for any vector $a\in \R^{d}$ such that $\|a\|=1$,
\begin{equation}
\sup_{t} \left|P\left(a^T \Delta^*_n<t|X_1,\cdots,X_n\right)-P\left(a^T\Delta_n<t\right)\right|
\overset{P} {\rightarrow}0.
\end{equation}
Thus, for any $\ell=1,\cdots, d$,
\begin{align*}
P(\theta_\ell \in C_{n,\alpha,\ell})& {\rightarrow}  1-\alpha\\
P(\theta_\ell \in C^\dagger_{n,\alpha,\ell})&{\rightarrow}  1-\alpha,
\end{align*}
where $C_{n,\alpha,\ell}$ and $C^\dagger_{n,\alpha,\ell}$
are the CIs based on equations \eqref{eq::CI_bt} and \eqref{eq::pivot}, respectively.
\label{thm::fixed}
\end{thm}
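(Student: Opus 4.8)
The plan is to treat both $\hat{\theta}_{ELBO}$ and $\hat{\theta}^*_{ELBO}$ as M-estimators solving the sample and bootstrap score equations, linearize each around the appropriate center, and then compare the two resulting normalized sums to a common Gaussian limit in Kolmogorov distance. Write $\psi(x) = \Psi_\theta(\theta_{ELBO}|x)$ and $A = A(P_0,\theta_{ELBO})$. Conditions (B1-5) are exactly those under which the M-estimator expansion behind Theorem~\ref{thm::fixed_asymptotics} holds, giving
\begin{equation*}
\Delta_n = -A^{-1}\frac{1}{\sqrt n}\sum_{i=1}^n \psi(X_i) + o_P(1).
\end{equation*}
Since $\theta_{ELBO}$ maximizes $\Psi$, we have $\E\psi(X_1) = \nabla\Psi(\theta_{ELBO}) = 0$, so for a fixed unit vector $a$ the quantity $a^T\Delta_n$ is, up to $o_P(1)$, a normalized sum of i.i.d.\ mean-zero scalars $W_i = -a^TA^{-1}\psi(X_i)$ with variance $a^TVa$ and finite third moment, the latter guaranteed by the added hypothesis $\E\|\psi(X_1)\|^3<\infty$. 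First I would invoke the classical Berry-Esseen bound to conclude $d_K\!\left(a^T\Delta_n, N(0,a^TVa)\right) = O(n^{-1/2}) \to 0$.

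Next I would derive the analogous expansion for the bootstrap. The original estimator satisfies the first-order condition $\frac1n\sum_i\Psi_\theta(\hat{\theta}_{ELBO}|X_i)=0$, while the bootstrap estimator satisfies $\frac1n\sum_i\Psi_\theta(\hat{\theta}^*_{ELBO}|X^*_i)=0$; a Taylor expansion of the latter about $\hat{\theta}_{ELBO}$ yields
\begin{equation*}
\Delta^*_n = -\hat{A}^{-1}\frac{1}{\sqrt n}\sum_{i=1}^n \Psi_\theta(\hat{\theta}_{ELBO}|X^*_i) + o_{P^*}(1),
\end{equation*}
where $\hat{A}$ is the (conditionally consistent) bootstrap Hessian and $o_{P^*}(1)$ denotes a remainder that is $o_P(1)$ conditional on the data, with data-probability tending to one. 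The crucial observation is that, conditionally on the data, the summands $\Psi_\theta(\hat{\theta}_{ELBO}|X^*_i)$ are i.i.d.\ draws from the empirical distribution with \emph{exact} conditional mean $\frac1n\sum_j\Psi_\theta(\hat{\theta}_{ELBO}|X_j)=0$, so no recentering is needed. Applying a conditional Berry-Esseen bound to $a^T\Delta^*_n$ and using the law of large numbers to show that the conditional variance tends to $a^TVa$ and the conditional third moment to $\E|W_1|^3$, I would obtain $d_K\!\left(a^T\Delta^*_n, N(0,a^T\hat{V}a)\right) \overset{P}{\rightarrow}0$, where $\hat{V}$ is the plug-in sandwich matrix.

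Then I would combine the pieces. Consistency of the plug-in components gives $\hat{V}\overset{P}{\rightarrow}V$, hence $a^T\hat{V}a \to a^TVa$, so the two Gaussian CDFs are uniformly close. A triangle inequality in Kolmogorov distance linking $a^T\Delta^*_n$, $N(0,a^T\hat{V}a)$, $N(0,a^TVa)$, and $a^T\Delta_n$ then delivers the main display. For coverage, specializing to $a=e_\ell$ makes $a^T\Delta_n$ the $\ell$-th coordinate, and the uniform closeness of the bootstrap CDF $\hat{G}_\ell$ to the CDF of $\Delta_{n}$ forces the percentile endpoints $\hat{s}_{\ell,\gamma}$ to converge to the matching quantiles, yielding $P(\theta_\ell\in C_{n,\alpha,\ell})\to1-\alpha$. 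For the pivotal interval I would additionally use the assumed consistency of $\hat{\sigma}^2_{ELBO,\ell}$ and of its bootstrap analogue to reduce both $T_n$ and $T^*_n$ to studentized linear statistics with common limit $N(0,1)$, giving $P(\theta_\ell\in C^\dagger_{n,\alpha,\ell})\to1-\alpha$.

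The hard part will be making the bootstrap remainder rigorously $o_{P^*}(1)$ in probability: this needs a stochastic-equicontinuity control of the bootstrap empirical process of the score near $\theta_{ELBO}$ (a conditional multiplier/Donsker argument), together with conditional consistency of $\hat{\theta}^*_{ELBO}$ for $\hat{\theta}_{ELBO}$ and of the bootstrap Hessian for $A$. The smoothness and domination built into (B1-5) supply the ingredients, but verifying that the empirical-process bounds transfer to the bootstrap draws uniformly enough that the conditional remainder vanishes with data-probability one is the delicate step; the studentized case inherits the same difficulty through the required bootstrap consistency of the variance estimator.
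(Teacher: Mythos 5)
Your proposal is correct and follows essentially the same route as the paper: an M-estimator linearization of $\Delta_n$ plus the Berry--Esseen theorem, a parallel conditional linearization of $\Delta_n^*$ around $\hat{\theta}_{ELBO}$ (exploiting that the bootstrap score has exact conditional mean zero) plus a conditional Berry--Esseen bound, comparison of the two Gaussian limits via consistency of the plug-in variance, and finally coordinate vectors $a=e_\ell$ for the CI coverage. The only difference is one of candor: the bootstrap remainder $o_{P^*}(1)$ that you flag as the delicate step requiring a bootstrap empirical-process/equicontinuity argument is simply asserted in the paper's proof (as $\|E_n^*\|=o_P(1/\sqrt{n})$) without justification, so your version makes explicit a gap the paper leaves implicit.
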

The proof is deferred to the appendix.
Theorem~\ref{thm::fixed} shows that no matter which orientation we project onto (using the unit vector $a$),
the distribution of random variable $\hat{\theta}_{ELBO}-{\theta}_{ELBO}$
and the distribution of its bootstrap variant $\hat{\theta}^*_{ELBO}-\hat{{\theta}}_{ELBO}$
converge.
Thus,
the bootstrap quantile converges to the quantile of the actual distribution,
which proves validity of the bootstrap.


\subsection{Increasing Dimension}

We now study the bootstrap theory when the dimension of parameters is allowed to increase with respect to the sample size.
These situations occurs in many scenarios.
For example, in a mixed membership model,
we may want to increase the number of subgroups
when we have a larger sample.
Or in an item response theory model, both the number of questions in a test and the number of participants
may be increasing at the same time \citep{haberman1977maximum,douglas1997joint}.
In this case, we will write $d=d_n \rightarrow\infty$ as $n\rightarrow\infty$.
Note that we only allow $d$, the dimension of $\theta$ increase and the dimension of variational parameters
are assumed to be fixed. 
Thus, the population quantity $\theta_{ELBO}$ will also be changing.

{\bf Assumptions.}
\begin{itemize}
\item[(A0)] $\theta_{ELBO}\in\Theta$ is the unique maximizer of $\Psi(\theta)$ and $\omega_{\max}(\theta|x)$ is unique for each $\theta\in\Theta$
and almost surely for $x\in\R^J$ under $P_0$.
\item[(A1)] There exists $c_0>0$ such that all eigenvalues of $\Psi_{\theta\theta}(\theta_{ELBO})$ are not in $[-c_0,c_0]$
for any $d$.
\item[(A2)] There exists $r_0,c_1>0$ such that for any unit vectors $b_1,b_2,b_3\in \R^d$,
$$
\sup_{x}|\nabla_{b_1}\nabla_{b_2}\nabla_{b_3}{\sf \cE}(\theta|x)|\leq c_1<\infty
$$
for all $\theta \in B(\theta_{ELBO}, r_0)$ and $d$.
\item[(A3)] There exists $c_2>0$ such that for any unit vector $a\in\R^d$,
$$
\E\left(|\nabla_a{\sf \cE}(\theta|X_1)|^3\right)\leq c_2<\infty
$$
for any $d$.
\end{itemize}

(A0) is a very common assumption that requires $\theta_{ELBO}$ to be uniquely defined \citep{westling2015establishing}.
Note that we can relax (A0) to require $\theta_{ELBO}$ to be unique under permuting the indices
when the model is symmetric (such as the example in Section~\ref{sec::NLTCS}).
The theoretical results will be the same after a small modification to the proof so here we make this assumption to simplify the exposition.
(A1) implies that the Hessian matrix of $\Psi$ is invertible at $\theta_{ELBO}$ when $n,d\rightarrow \infty$.
This is a generalization of the invertible Fisher information matrix condition to the increasing-dimensional setting.
(A2) can be viewed as a generalization of a bounded $2$-norm of the third derivative tensor
$\nabla\nabla\nabla {\sf \cE}(\theta_{ELBO}|x)$.
To see this, consider only two-directional derivative, $|\nabla_{b_1}\nabla_{b_2} \cE(\theta_{ELBO}|x)|$.
The supremum of this will be the $2$-norm (maximum absolute eigenvalue) of $\cE (\theta_{ELBO}|x)$.
Note that assumptions similar to (A1--2) also appear in \cite{portnoy1985asymptotic} and \cite{mammen1989asymptotics}.
(A3) is a third moment condition that is used to establish a Berry-Esseen bound \citep{berry1941accuracy,esseen1942liapounoff}.
Note that when $d$ is changing with respect to $n$, (A2) and (A3) can be relaxed
so that constants $c_1$ and $c_2$ can depend on $n$. However, this 
relaxation will put another constraint on how fast $d\rightarrow n$ with respect to $n\rightarrow\infty$.

Note that we do not assume the distribution $P_\theta = P(x,z;\theta)$ belongs
to an exponential family.
If $P_\theta$ belongs the exponential family, the assumptions can be weakened
to the assumptions in \cite{portnoy1988asymptotic}.

\begin{thm}
Assume (A0-3) and $d = d_n\rightarrow \infty$ and $\frac{d^2}{n}\rightarrow0$.
Then, for any vector $a_n\in\R^{d}$ such that $\|a_n\|=1$, there exists a number $v(a_n)$ such that
\begin{equation}
\sup_{t} \left|P\left(a_n^T \Delta_n<t\right)-P\left(\sigma(a_n)\cdot Z<t\right)\right|
 {\rightarrow}0,
\end{equation}
where
$Z$ is a standard normal random variable.
Moreover,
\begin{equation}
\sup_{t} \left|P\left(a_n^T \Delta^*_n<t|\mathcal{X}\right)-P\left(a_n^T\Delta_n<t\right)\right|
\overset{P} {\rightarrow}0.
\end{equation}
Thus, for any $\ell=1,\cdots, d$,
\begin{align*}
P(\theta_\ell \in C_{n,\alpha,\ell})&{\rightarrow}  1-\alpha\\
P(\theta_\ell \in C^\dagger_{n,\alpha,\ell})&{\rightarrow}  1-\alpha,
\end{align*}
where $C_{n,\alpha,\ell}$ and $C^\dagger_{n,\alpha,\ell}$
are the CIs based on equations \eqref{eq::CI_bt} and \eqref{eq::pivot}, respectively.
\label{thm::increase}
\end{thm}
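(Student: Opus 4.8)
The plan is to treat $\hat\theta_{ELBO}$ as an $M$-estimator, linearize its estimating equation around $\theta_{ELBO}$, reduce each projection $a_n^T\Delta_n$ to a standardized sum of i.i.d.\ scalars, and invoke a one-dimensional Berry--Esseen bound. Since $\hat\theta_{ELBO}$ maximizes $\frac1n\sum_i\cE(\theta|X_i)$ it solves $\frac1n\sum_i\nabla\cE(\hat\theta_{ELBO}|X_i)=0$, and since $\theta_{ELBO}$ maximizes $\Psi$ we have $\E\,\nabla\cE(\theta_{ELBO}|X_1)=0$. A Taylor expansion of the estimating equation will then yield a representation $a_n^T\Delta_n=\frac{1}{\sqrt n}\sum_{i=1}^n W_{n,i}+r_n$, where $W_{n,i}=-a_n^T\Psi_{\theta\theta}(\theta_{ELBO})^{-1}\nabla\cE(\theta_{ELBO}|X_i)$ are i.i.d.\ mean-zero scalars and $r_n$ is a remainder. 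Setting $v(a_n)=a_n^T\Psi_{\theta\theta}^{-1}B(P_0,\theta_{ELBO})\Psi_{\theta\theta}^{-1}a_n=\mathrm{Var}(W_{n,1})$ and $\sigma(a_n)=\sqrt{v(a_n)}$, the leading term is exactly the object whose law converges to $\sigma(a_n)Z$.

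First I would establish a consistency rate $\norm{\hat\theta_{ELBO}-\theta_{ELBO}}=O_P(\sqrt{d/n})$ in the usual increasing-dimension way: expand $\Psi$ to second order, use (A1) to get a quadratic lower bound $\Psi(\theta_{ELBO})-\Psi(\theta)\ge\tfrac{c_0}{2}\norm{\theta-\theta_{ELBO}}^2$ on $B(\theta_{ELBO},r_0)$ with (A2) controlling the cubic remainder, and bound the empirical gradient fluctuation at $\theta_{ELBO}$ (its $d$ coordinates each of order $n^{-1/2}$) via (A3). With this rate I control $r_n$: writing $u=\hat\theta_{ELBO}-\theta_{ELBO}$ and $c_n=\Psi_{\theta\theta}^{-1}a_n$ (so $\norm{c_n}\le 1/c_0$ by (A1)), the quadratic remainder equals $\tfrac{\sqrt n}{2}\norm{u}^2\,\nabla_{c_n}\nabla_{u/\norm{u}}\nabla_{u/\norm{u}}\cE$ at an intermediate point, which (writing $\nabla_{c_n}=\norm{c_n}\,\nabla_{c_n/\norm{c_n}}$ and applying (A2) to the unit-vector directional derivative) is at most $\tfrac{\sqrt n}{2}\norm{u}^2\,\norm{c_n}\,c_1=O_P(d/\sqrt n)$, so $r_n\to0$ precisely under $d^2/n\to0$. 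A parallel argument shows that replacing the sample Hessian by $\Psi_{\theta\theta}(\theta_{ELBO})$ in the linearization contributes a further $o_P(1)$ term. The one-dimensional Berry--Esseen theorem applied to $\frac{1}{\sqrt n}\sum_i W_{n,i}/\sigma(a_n)$ then gives a bound $C\,\E|W_{n,1}|^3/(\sigma(a_n)^3\sqrt n)$; by (A3) the numerator is bounded ($\E|W_{n,1}|^3\le\norm{c_n}^3 c_2$), so provided $v(a_n)$ stays bounded away from zero this is $O(1/\sqrt n)\to0$. Combining with $r_n=o_P(1)$ and boundedness of the normal density yields the first display.

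The bootstrap statement I would prove by the same linearization carried out conditionally on $\mathcal{X}$. Given the data, $X^*_1,\dots,X^*_n$ are i.i.d.\ from the empirical distribution, and crucially the bootstrap score at the recentering point has conditional mean $\frac1n\sum_i\nabla\cE(\hat\theta_{ELBO}|X_i)=0$ by the defining property of $\hat\theta_{ELBO}$. Expanding the bootstrap estimating equation around $\hat\theta_{ELBO}$ gives $a_n^T\Delta^*_n=\frac{1}{\sqrt n}\sum_i W^*_{n,i}+r^*_n$ with $W^*_{n,i}=-a_n^T\Psi_{\theta\theta}^{-1}\nabla\cE(\hat\theta_{ELBO}|X^*_i)$ conditionally i.i.d.\ and mean zero, and $r^*_n=o_P(1)$ by the same $d^2/n\to0$ bound. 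The conditional variance $\frac1n\sum_i(a_n^T\Psi_{\theta\theta}^{-1}\nabla\cE(\hat\theta_{ELBO}|X_i))^2$ converges in probability to $v(a_n)$ (using the rate on $\hat\theta_{ELBO}$, (A2) to pass from $\hat\theta_{ELBO}$ to $\theta_{ELBO}$, and a law of large numbers controlled by (A3)), and the conditional third moment is bounded, so a conditional Berry--Esseen bound gives $\sup_t|\P(a_n^T\Delta^*_n<t\mid\mathcal{X})-\P(\sigma(a_n)Z<t)|\overset{P}{\rightarrow}0$. The triangle inequality against the first display then yields the second display.

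Finally, taking $a_n=e_\ell$ specializes the projected convergence to the $\ell$-th coordinate, so $\hat G_\ell$ consistently estimates the law of $\hat\theta_{ELBO,\ell}-\theta_{ELBO,\ell}$ and the percentile interval $C_{n,\alpha,\ell}$ attains coverage $1-\alpha$; for the pivotal interval one additionally notes that $\hat\sigma_{ELBO,\ell}$ and its bootstrap analogue are consistent for $\sigma(e_\ell)/\sqrt n$ (the $\ell$-th diagonal sandwich entries), so $T_n$ and $T^*_n$ share the limit $Z$ and $C^\dagger_{n,\alpha,\ell}$ also attains coverage $1-\alpha$. I expect the main obstacle to be the second step: controlling the nonlinear remainder and the Hessian-replacement error \emph{uniformly in the direction $a_n$} while tracking the exact dependence on $d$, since this is where the sharp condition $d^2/n\to0$ is consumed and where the analysis departs from fixed-$d$ $M$-estimation theory; a secondary technical point is guaranteeing that $v(a_n)$ is bounded away from zero so that the Berry--Esseen ratio does not degenerate.
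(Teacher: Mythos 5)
Your proposal is correct and follows essentially the same route as the paper's proof: linearize the estimating equation around $\theta_{ELBO}$, reduce $a_n^T\Delta_n$ to the i.i.d.\ sum $-\frac{1}{\sqrt n}\sum_i a_n^T\Psi_{\theta\theta}^{-1}\nabla\cE(\theta_{ELBO}|X_i)$ plus a remainder controlled by (A1)--(A2) and the rate $\|\hat\theta_{ELBO}-\theta_{ELBO}\|=O_P(\sqrt{d/n})$ (which consumes $d^2/n\to0$), apply a one-dimensional Berry--Esseen bound via (A3), repeat conditionally on $\mathcal{X}$ for the bootstrap with the empirical sandwich variance shown consistent, and specialize $a_n=e_\ell$ for the intervals. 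If anything, you make explicit two points the paper leaves implicit -- the derivation of the $O_P(\sqrt{d/n})$ consistency rate and the need for $\sigma(a_n)$ to stay bounded away from zero in the Berry--Esseen ratio.
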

The proof is deferred to the appendix.
The first assertion in Theorem~\ref{thm::increase} states that the difference between the ELBO estimator
and its target converges to a normal distribution when we project the difference to any direction.
The quantity $\sigma(a)$ is the standard deviation of the difference of the estimator
that depends on the data-generating distribution $P_0$ and on the variational family that is being used.
Note that, when the dimension is fixed, $\sigma(a) = a^T V(P_0, \theta_{ELBO})a$.

The second assertion in Theorem~\ref{thm::increase}
shows that the limiting distributions of the scaled difference and
its bootstrap variant are asymptotically the same.
This implies that the CI constructed using the bootstrap
or variance estimated by the bootstrap is asymptotically valid.

Note that the requirement $\frac{d^2}{n}\rightarrow0$
is very common in increasing-dimensional problem; see, e.g., \cite{mammen1993bootstrap,portnoy1988asymptotic}.

\begin{remark}[Increasing both $d$ and $s$]
Theorem~\ref{thm::increase} can be applied to a case where both the dimension of parameter $d$
and the dimension of variational parameter $s$ are increasing. 
In this case, we need assumptions (A0-A3) to hold for every $s$ and $d$. 
When we allow $s=s_n$ to increase, the assumption (A1) may be too strong.
We can relax this assumption by allowing the constant $c_0$ in (A1) to decrease to $0$ slowly. 
The increasing rate of $s_n$ will be constrained by the decreasing rate of $c_0$
to guarantee the invertibility of $\Psi_{\theta\theta}$.
\end{remark}

\begin{remark}[Increasing $s$ only]
\label{rm::s}
Even when $d$, the dimension of the parameter, remains fixed (i.e., the population MLE $\theta_{MLE}$ is fixed),
changing $s$, the dimension of $\omega$,
will also change the (population) quantity $\theta_{ELBO}=\theta_{ELBO,s}$.
In some situations, we even have
$\theta_{ELBO} = \theta_{ELBO,s}\rightarrow \theta_{MLE}$;
see \cite{hall2011asymptotic}
and \cite{bickel2013asymptotic} for examples.
The difference $\theta_{ELBO,s}- \theta_{MLE}$
can be viewed as the bias of the variational estimator.
Because the dimension of variational parameter $s$
can be viewed as a measure of model complexity
of the variational estimator,
the property $\theta_{ELBO,s}\rightarrow \theta_{MLE}$ can be interpreted as
an asymptotic unbiasedness property in terms of model complexity.
\end{remark}

\begin{remark}[High-dimensional case]
When $d>n$, the conventional central limiting theorem fails
because of the complexity coming from the high dimensional parameters \citep{portnoy1984asymptotic,portnoy1985asymptotic}.
Thus, CIs from the percentile or pivotal approaches
do not have the nominal coverage.
However, it is still possible to construct an asymptotically valid CI
using the bootstrap.
The rectangle CI \citep{chernozhukov2013gaussian}
is one example.
We refer the readers to \cite{chernozhukov2013gaussian,wasserman2013estimating,fan2015guarding}
for more details about rectangle CIs.
\end{remark}

\section{Data Analysis}	\label{sec::NLTCS}
We illustrate our theoretical results with multivariate binary data on functional disability from the National Long Term Care Survey (NLTCS).  \cite{erosheva2007describing} presented the first case of variational estimation for mixed membership models with binary data from the NLTCS. Here, we consider observations collected on the NLTCS participants in 1984, 1989 and 1994. The data contain binary indicators on 6 activities of daily living (ADL) and 10 instrumental activities of daily living (IADL) for community-dwelling elderly. 
The 6 ADL items include basic activities of hygiene and personal care: eating, getting in/out of bed, getting around inside, dressing, bathing, and getting to the bathroom or using toilet. The 10 IADL items include basic activities necessary to reside in the community: doing heavy housework, doing light housework, doing laundry, cooking, grocery shopping, getting about outside, traveling, managing money, taking medicine, and telephoning. Responses are coded as 0 and 1, where 1 denotes a presence and 0 denotes an absence of a functional disability. In the NLTCS, positive (1) ADL responses mean that during the past week the activity had not been, or was not expected to be, performed without the aid of another person or the use of equipment; negative (0) IADL responses mean that a person usually could not, or was not going to be able to, perform the activity because of a disability or a health problem. For a more in-depth discussion, see \citet{manton1993estimates}, and \citet{erosheva2006operational}.

Similar to \citet{erosheva2007describing}, we also use a mixed membership analysis.  \citet{erosheva2007describing} take a fully Bayesian approach and specify priors for the $\alpha$ and $\Pi$ model parameters discussed below; however, in this analysis we take a frequentist approach and directly compute maximum ELBO estimates for $\alpha$ and $\Pi$. Also, \citet{erosheva2007describing} analyze all four waves (1982, 1984, 1989, and 1994), but we restrict our analysis to the 1984, 1989, and 1994 waves. 

In particular, we are interested in two tasks. First, we use a mixed membership model to describe the 5,934 observations in the 1984 wave. We use a variational procedure to estimate the model parameters and then give bootstrapped confidence intervals for each of those estimates. Next, we consider the 4,463 and 5,089 observations from 1989 and 1994 respectively. We test whether the responses observed in 1989 and 1994 arise from the same distribution. 
Given the two natural sub-populations, this corresponds to a possible two-sample test described in Section \ref{sec:furtherConsid}. A conceptually simpler approach could be used instead of a model based approach. At the coarsest resolution, this might be a two sample t-test for each of the 16 variables, and at the finest resolution, this might be a two sample t-test for each of the $2^{16}$ possible response patterns. However, testing in the mixed membership framework allows investigation of subtle changes in the underlying structure, while still retaining easy interpretation.

\subsection{Mixed membership models and variational inference}
Throughout this analysis, we use mixed membership models to uncover latent structure. Like a mixture model, mixed membership models assume that the population is comprised of several groups, where each group has a distribution over the observed variables. However, while mixture models assume that each individual belongs to a single group, mixed membership models allow each individual to have a partial membership in multiple groups \citep{airoldi2014handbook}. Mixed membership models have been used for topic modeling \citep{blei2003latent}, social network analysis \citep{airoldi2008mixed}, survey data \citep{erosheva2007describing}, and statistical genetics \citep{pritchard2000inference}. Note that allowing for mixed membership differs from estimating the posterior probability of group assignment when using a mixture model. Under a mixture model, as the data about an individual grows, the posterior should concentrate on a single group, while in a mixed membership model, as the data about an individual grows, we may consistently estimate the individual's membership, which could be in the interior of the simplex. 

In the setting we consider, for each individual $i = 1, \ldots, n$ we observe multivariate data $X_i = (X_{i,1}, \ldots, X_{i,16})$ and assume the following generative model. Let $j = 1, \ldots J= 16$ index variables and $K$ be the fixed number of groups. We assume fixed parameters $\alpha \in \mathbb{R}^K_{> 0}$, which regulates the Dirichlet distribution for group membership, and $\Pi = \{\pi_{jk}\}$ for $j = 1, \ldots, J$ and $k = 1, \ldots, K$, where $\pi_{jk}$ is the Bernoulli parameter for a response to variable $j$ from a full member of group $k$. The generative model for individual $i$ is: 
\begin{enumerate}
  	\item $\lambda_i \sim \text{Dirichlet}(\alpha)$, where $\lambda_i$ lies in the $K-1$ simplex (i.e., $\sum_k \lambda_{ik} = 1$ and $\lambda_{ik}\geq 0$). Each element $\lambda_{ik}$ characterizes the extent of membership for individual $i$ in group $k$. 
  \item For each variable $j$:
  \begin{itemize}
  \item $g_{ij} \sim \text{Categorical}(\lambda_i)$, where $g_{ij} \in \{1, \ldots, K\}$ indicates the group whose parameters govern individual $i$'s response to question $j$.
  \item $X_{ij} \sim \text{Bernoulli}(\pi_{jg_{ij}})$, the observed response for individual $i$ on question $j$.
\end{itemize}
\end{enumerate}
This hierarchical model assumes that each individual responds to each question as a full member of group $g_{ij}$. However, for each individual, the group may vary across variables and the probability of responding as a full member of group $k$ for each question is governed by $\lambda_{ik}$. In addition, $X_{ij}$ is independent of $X_{ij'}$ given $\lambda_{ik}$. 

The parameters of interest are $\alpha$ and $\Pi$. For the Dirichlet parameter $\alpha$, the quantity $\alpha_{k} /\sum_{k'}\alpha_{k'}$ indicates the relative proportion of each group and the magnitude, $\sum_k \alpha_k$, indicates the level of intra-individual mixing. Distributions with larger values of $\sum_k \alpha_k$ concentrate density in the interior of the simplex and imply a higher level of intra-individual mixing, while distributions with smaller values of $\sum_k \alpha_k$ concentrate density in the corners of the simplex and indicate less intra-individual mixing. The Bernoulli parameters $\Pi$ characterize the ability/disability of each group. The parameters $\lambda_i$ and $g_{ij}$ are latent variables which we consider as nuisance parameters. In the previous notation, $\theta = \{\alpha, \Pi\}$ and $Z_{i} = \{\lambda_i, g_{ij}\}$.

Although the model is straightforward to describe and generate, maximum likelihood estimation is difficult because the normalizing constant is intractable. Thus, to fit the model, we use the \texttt{mixedMem} R package \citep{wang2015fitting} which specifies the following mean field variational distribution with variational parameters $\phi_i \in \mathbb{R}^K_{> 0}$ and $\delta_{ij}$ in the $K-1$ simplex:
\begin{equation}
  \begin{aligned}
  \lambda_i &\sim \text{Dirichlet}(\phi_i)\\
  g_{ij} &\sim \text{Categorical}(\delta_{ij});\\
  X_{ij} &\sim \text{Bernoulli}(\pi_{jg_{ij}}).
\end{aligned}
\end{equation}
In the previous notation, $\omega_i = \{\phi_i, \delta_{ij}\}$. The likelihood and specified variational distribution yield the following ELBO:

\begin{equation}\scriptsize
\begin{aligned}
{\sf ELBO}(\theta, \omega |X) =
& \sum_i \log\Gamma\left(\sum_k \alpha_k\right)  - \sum_{i,k}\log\Gamma\left(\alpha_k\right) + \sum_{i,k}(\alpha_k-1)\left[\Psi(\phi_{ik})- \Psi\left(\sum_k \phi_{ik}\right)\right] \\
& \quad + \sum_{i,j,k} \delta_{ijk}  \left[\Psi(\phi_{ik})- \Psi\left(\sum_k \phi_{ik}\right)\right]\\
&\quad + \sum_{i,j,k}\delta_{ijk} X_{ij}\log(\pi_{jk}) + \sum_{i,j,k} \delta_{ijk} (1-X_{i,j}) \log(1-\pi_{jkv})\\
&\quad -
 \sum_i \log\Gamma\left(\sum_k \phi_{ik}\right)  + \sum_{i,k}\log\Gamma(\phi_{ik})\\
 &\quad - \sum_{i,k}(\phi_{ik}-1)\left[\Psi(\phi_{ik})- \Psi\left(\sum_k \phi_{ik}\right)\right] - \sum_{i,j,k}\delta_{ijk}\log(\delta_{ijk})
\end{aligned}
\end{equation}
where $\Gamma(\cdot)$ is the gamma function and $\Psi(\cdot)$ is the digamma function which is the derivative of the log-$\Gamma$ function.  We maximize the ELBO with respect to the parameters of interest, $\alpha$ and $\Pi$, and the variational parameters, $\phi_i$ and $\delta_{ij}$, through a block coordinate ascent procedure which alternates between two steps. In the first step, holding $\alpha$ and $\Pi$ fixed, we compute the optimal variational parameters by iterative coordinate ascent. Then, holding the variational parameters fixed, we update $\alpha$ and $\Pi$ through a Newton-Raphson procedure. Because there is no closed-form solution for $\hat \delta_{ij}(\alpha, \Pi)$ and $\hat \phi_i(\alpha, \Pi)$, we can not easily compute a Hessian required for the sandwich estimator of \citet{westling2015establishing} or the pivotal confidence intervals summarized by Figure~\ref{fig::alg::pivot}. However, percentile based bootstrap confidence intervals and bootstrap variance estimates can be used.

\subsection{Initial analysis and bootstrapped standard errors}
We first select an appropriate number of groups, $K$, using a pseudo-BIC criterion:
\begin{equation}
  \text{pBIC} = p\log(n) - 2 \times {\sf ELBO}(\hat \theta_{ELBO}, \hat \omega_{ELBO} |X),
\end{equation}
where $p = K + J\times K$, the count of parameters $\alpha$ and $\Pi$. Because the ELBO is generally multi-modal, we use 1,000 random initialization points (for $\alpha$ and $\Pi$) for $K = 2, \ldots, 9$. For each $K$, we then select the resulting stationary point with the largest ELBO and compute the pseudo-BIC. Using many random restarts is important, because, as is typically the case, the ELBO defined by the mixed membership model and variational family we use is multi-modal. We see from the left panel of Figure~\ref{fig:BIC} that the ELBO of each stationary point can vary widely within each value of $K$. In the right panel of Figure~\ref{fig:BIC}, we plot only the lowest pBIC for each $K$; we see that the pBIC criteria leads us to select a 4 group model, though a 6 group model might also be appropriate. 

The estimated Bernoulli and Dirchlet parameters for the optimal 4 group model are presented in Figure~\ref{fig::modelParams} and \ref{fig::modelAlpha}. The confidence intervals shown in black are calculated by using the non-parametric percentile bootstrap summarized in Figure~\ref{fig::alg::PA} and the confidence intervals shown in red are calculated using the parametric percentile bootstrap.
The intervals used are post model-selection \citep{leeb2005model}.

\begin{figure}[htb]
\centering
\includegraphics[scale = .5]{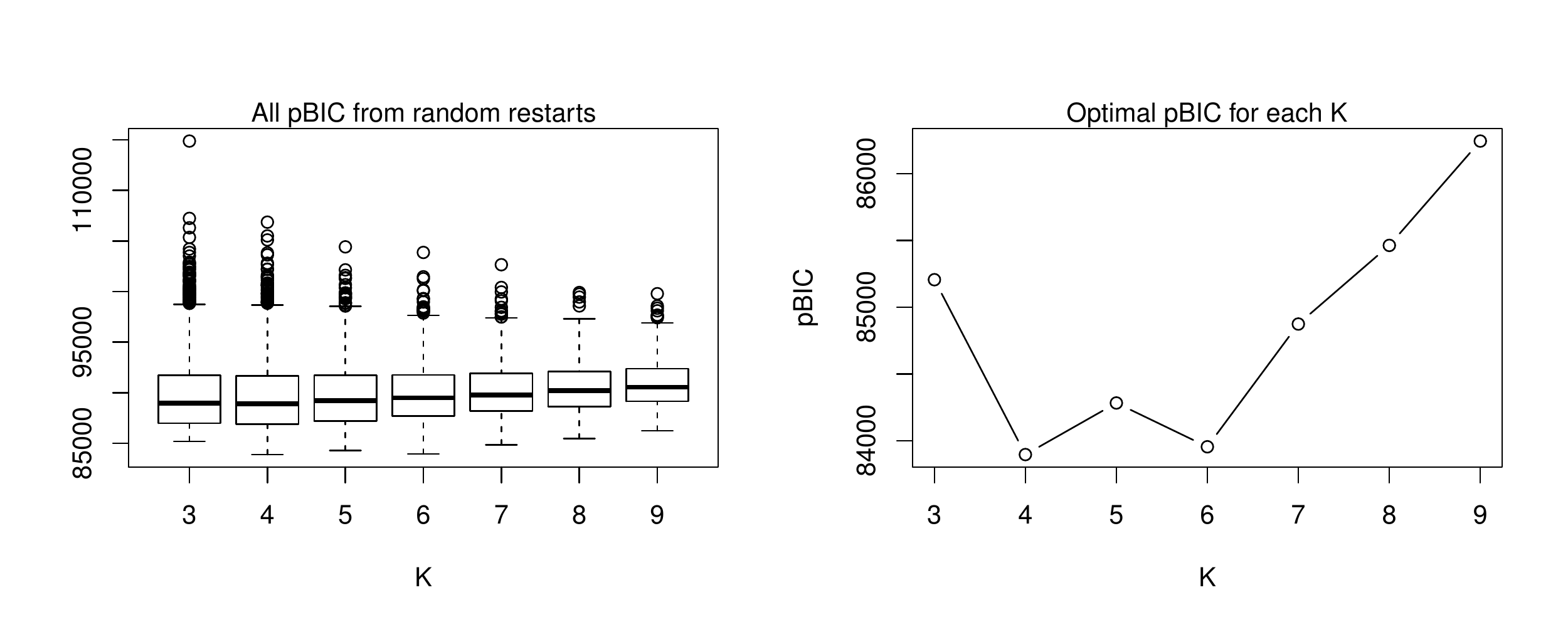}
\caption{\label{fig:BIC} The pseudo-BIC across levels of $K$ groups. The left panel shows pseudo-BIC values across all random initializations, and the right panel shows only the optimal model for each $K$.}
\end{figure}

Since the ELBO can be multi-modal and we use a coordinate ascent procedure, we need to carefully initialize each bootstrap run so that we do not enter another basin of attraction and overestimate the sampling variability. 
In particular, we initialize the global parameters, $\alpha$ and $\Pi$, 
as well as the individual latent variables,
$\lambda_i$ and $g_{ij}$, at the corresponding quantities estimated from the original data. 
In general, we expect each bootstrap run to require less computational effort than the original estimation procedure since we expect the initialization to be near the stationary point. In addition, each of the bootstrap runs can be easily parallelized on a cluster; for this particular analysis, an individual bootstrap run took roughly 15 seconds on a laptop.

In Figure~\ref{fig::modelParams}, we have sorted the groups top to bottom (1 through 4) by least disabled to most disabled. Group 1 is generally most likely to be able to perform each of the 16 tasks. Group 2 appears to be relatively less able to perform most physical/mobility related tasks, but is relatively more able to perform tasks requiring mental acuity. For instance, members of Group 2 are relatively less able to get in/out of bed, move around inside, and move around outside; however, they are relatively more able to cook, manage money, take medicine, and use the telephone. Group 3 appears to have more mobility, but is less able to perform tasks which require mental acuity. For instance, individuals in Group 3 are relatively more able to get in/out of bed, move around inside, and use the toilet, but less able to manage money or use the telephone. Finally Group 4 is generally least likely to be able to perform each task. The estimated Bernoulli parameters for Group 4 are higher than the marginal probabilities for all 16 tasks. 
Note that the CIs from the two bootstrap methods are small, indicating that our estimators are quite precise. 

\begin{figure}[htb]
\centering
\includegraphics[scale=.5]{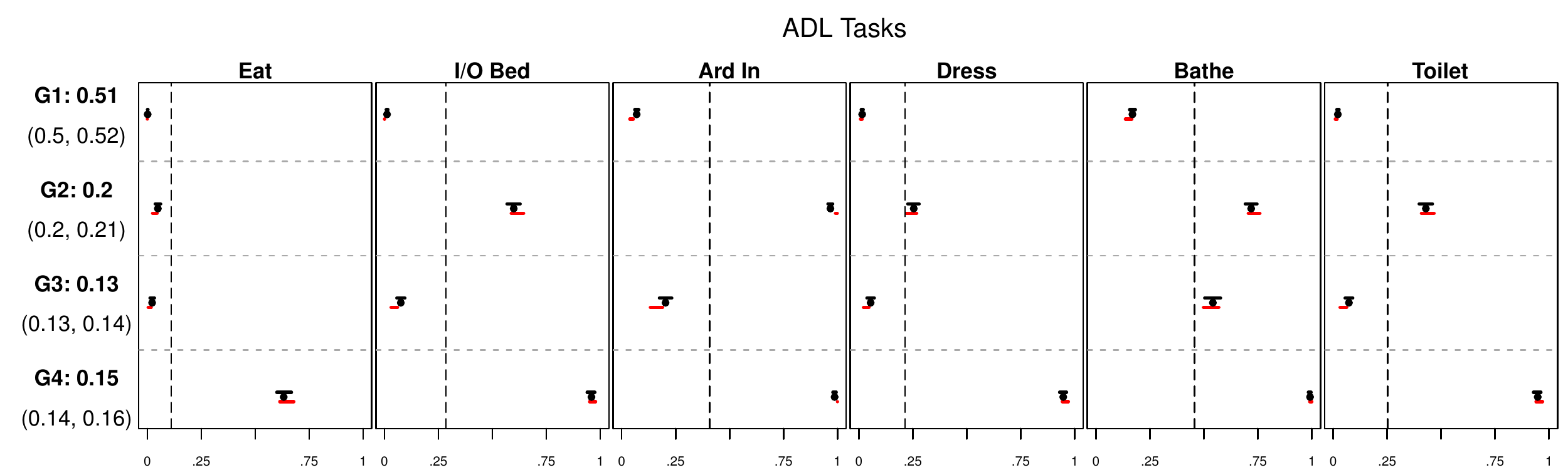}
\includegraphics[scale=.5]{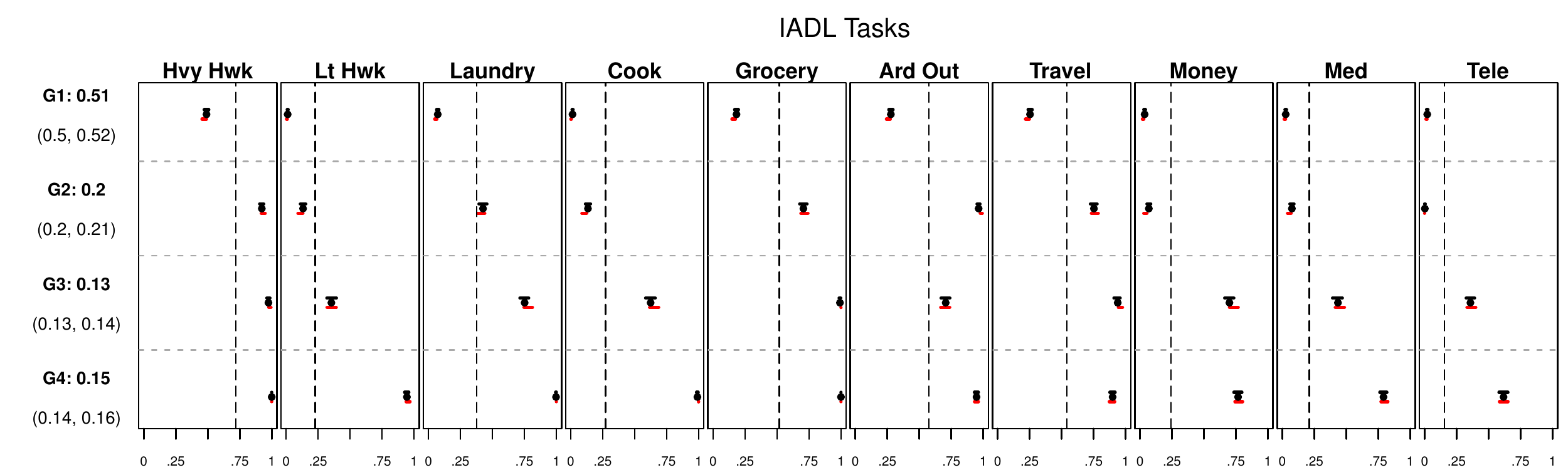}
\caption{\label{fig::modelParams}The estimated parameters and CIs for the 4 group mixed membership model. The black CI's are formed using the non-parametric bootstrap; the red CI's are formed using the parametric bootstrap. The top panel shows estimates for the ADL activities and the bottom panel shows estimates for the IADL activities. The estimated population proportion, $\hat \alpha_k/\sum_{k'} \hat \alpha_{k'}$ is shown on the left with the corresponding CI under each group label. For aiding interpretation, the vertical dashed lines shows the marginal proportion of individuals whose response was $1$ for each variable.}
\end{figure}

\begin{figure}[htb]
\centering
\includegraphics[scale=.5]{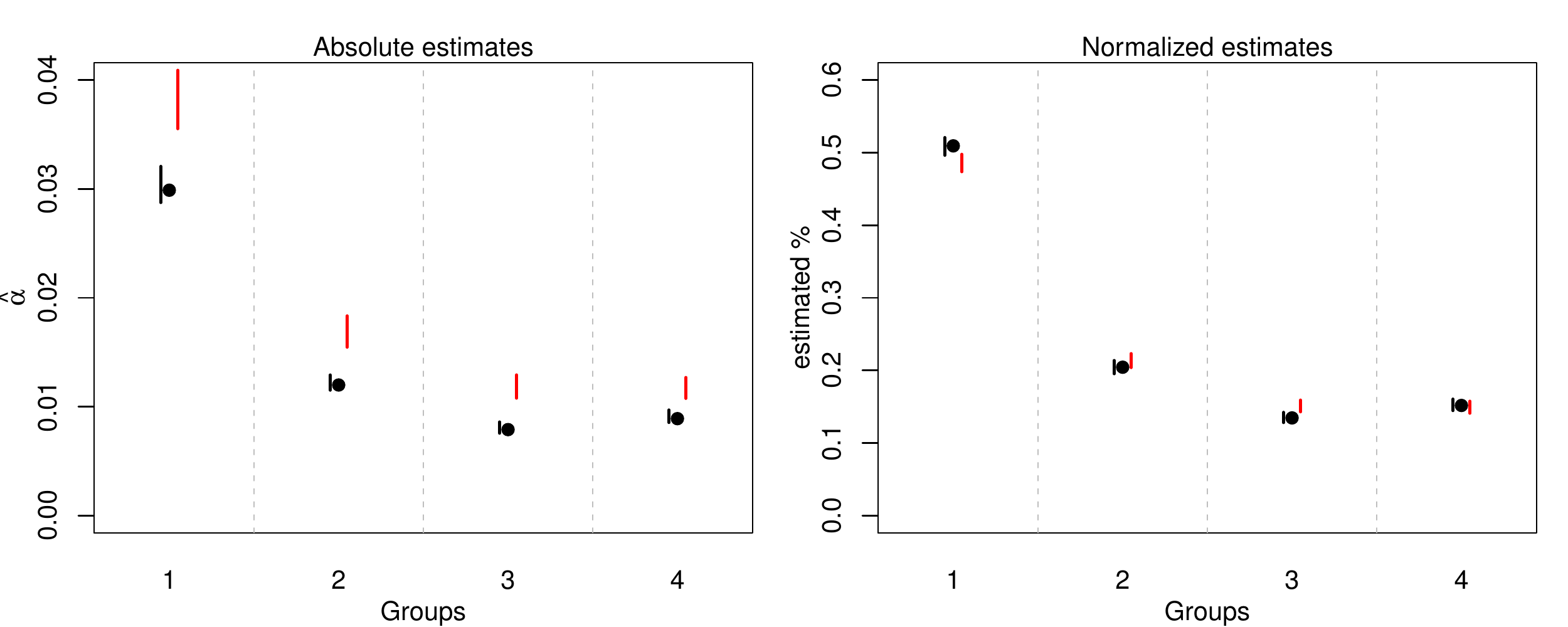}
\caption{\label{fig::modelAlpha}The left panel shows the estimates and CI's for $\alpha$ and the right panel shows the estimates and CI's for the proportion of each group; i.e., $\alpha_k / \sum_{k'} \alpha_{k'}$. The black CI's are formed using the non-parametric bootstrap and the red CI's are formed using the parametric bootstrap.}
\end{figure}

We caution against using the parametric bootstrap with variational inference since $\theta_{ELB0}$ in general is not equal to $\theta_{MLE}$ so the CI's may not always cover the variational point estimates. In particular, for the Bernoulli parameters, 37 of the 64 CI's constructed via the parametric bootstrap do not cover the point estimates. In addition, all 4 of the parametric bootstrap CI's for $\hat \alpha$ (and 2/4 of the CI's for the population proportions) do not cover the point estimates. However, all of the CI's (both for the Bernoulli and Dirichlet parameters) constructed using the non-parametric bootstrap do cover the point estimates. As noted by \citet{andrews2000inconsistency}, when the point estimates are near the boundary of the parameter space, the bootstrap estimates might be unstable. In this case, we see that when the Bernoulli parameters are close to 0 or 1, this generally causes a problem for the parametric bootstrap, but not for the non-parametric bootstrap.

\subsection{Two-sample test}	\label{sec::NLTCS::two}
We now consider observations from the 1989 and 1994 waves. In particular, we test whether the functional disability measures taken five years apart are generated by the same distribution. In order to concretely interpret differences between the two waves, we fix $K = 4$ and use the Bernoulli parameters estimated from the 1984 wave. We then find point estimates $\hat \alpha_{1989}$ and $\hat \alpha_{1994}$ separately by maximizing the ELBO with respect to $\alpha$ (keeping $\Pi$ fixed). Again, because of multi-modailty of the ELBO, we use 1000 random initialization to select an $\hat \alpha$ for each wave. In principal, fixing the Bernoulli parameters to any random quantity and concluding that $\alpha_{89, \text{ELBO}|\Pi} \neq \alpha_{94, \text{ELBO}|\Pi} $ would result in rejecting the null hypothesis (where $\alpha_{\text{ELBO}|\Pi}$ indicates the $\alpha$ value which maximizes the ELBO for fixed $\Pi$). However, we use the point estimates from the 1984 wave to facilitate interpretability.  

The estimated group proportions for 1989 and 1994 are shown in Figure~\ref{fig::alphaSplit} with the corresponding confidence intervals formed by the non-parametric percentile bootstrap standard errors. In 1994, the prevalence of the least disabled group (Group 1) increased, while the prevalence of Group 2 (incapable of mobility tasks, but capable of mental tasks), Group 3 (capable of mobility tasks, but incapable of mental tasks) and Group 4 (generally incapable of all tasks) all decreased by roughly .03 each.   

\begin{figure}[ht]
\centering
\includegraphics[scale=.55]{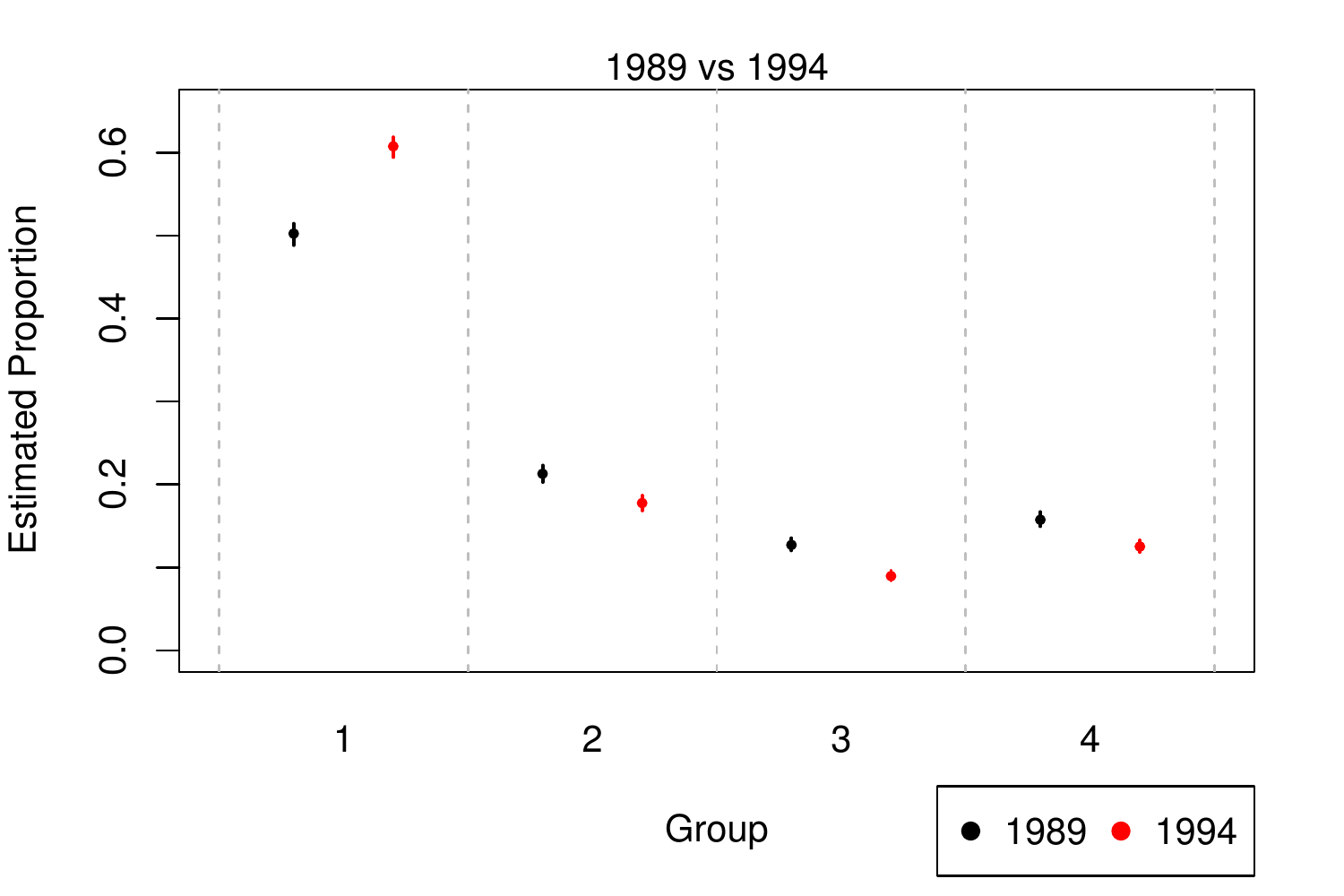}
\caption{\label{fig::alphaSplit}The estimated proportions of each group for 1989 and 1994. The CIs are percentile method bootstrapped intervals.}
\end{figure}

For good measure, we also perform a Wald test for the population proportions:
\[\hat p = (\hat \alpha_1, \hat \alpha_2, \hat \alpha_3) / \sum_{k=1}^4 \hat \alpha_k\]
where the proportion for Group 4 is excluded so that the distribution is non-degenerate. Using the bootstrap estimate of covariance $\widehat {V(\hat p)}$ calculated by the procedure summarized in Figure~\ref{fig::alg::var}, we find that
\[ (\hat p_{1989} - \hat p_{1994})^T\left(\widehat {V(\hat p_{1989})} + \widehat {V(\hat p_{1994})} \right)^{-1}(\hat p_{1989} - \hat p_{1994}) = 143.7\]
Thus, we reject the null hypothesis that all population proportions are equal with a p-value less than $10^{-16}$ when compared to the $\chi^2_3$ reference distribution.

\section{Discussion}	\label{sec::discussion}

%
%
%

We conclude this paper by including some remarks about practical aspects of the variational inference and the bootstrap, and by making several observations on the connections between the research presented in this paper and the work of late Stephen E. Fienberg, to whom this special issue of the {\it Annals of Applied Statistics} is dedicated. 

\paragraph{Bootstrap versus asymptotic normality}
For practitioners, constructing a CI using
a bootstrap approach is generally easier
than using the asymptotic normality \citep{hall2011asymptotic,westling2015establishing}.
To construct a CI using asymptotic normality, we need a (consistent) variance estimator and calculating
that estimator
often requires an involved derivation, which could be very challenging
when the model is complex. The NLTCS example of mixed membership is one such example.
And sometimes, such an estimator does not exist so we are unable
to use the asymptotic normality approach.
On the other hand, the implementation of a bootstrap approach is very easy --
it is just sampling with replacement and re-applying the variational inference.
The bootstrap approach does not require a consistent variation estimator
so it is a more general approach than the interval from asymptotic normality approach.
Moreover, if we do have a variance estimator,
as is discussed in Section~\ref{sec::pivotal},
we can construct a CI using the bootstrap pivotal approach,
which may lead to a CI with a higher order correctness \citep{singh1981asymptotic,babu1983inference,hall2013bootstrap}.

\paragraph{Implications for variational inference in Bayesian settings}
Theorems~\ref{thm::fixed_asymptotics} and \ref{thm::fixed}
proved the asymptotic normality and bootstrap validity
of using the variational inference to approximate the MLE.
These theorems
can be applied
to Bayesian variational estimators
as long as the prior is sufficiently smooth
(for a trivial case, consider a uniform prior on the parameter space)
or to a penalized ELBO with a very week (i.e., asymptotically negligible) penalty.
However, in the Bayesian framework, the posterior distribution and credible intervals are
the quantities of the interest and the CI is not the main objective.
For the penalized ELBO, a weak penalty is often not of research interest
because it neither encourages sparsity nor stabilizes the estimator.

\paragraph{Two-sample test and comparison}
The two-sample test used in Section~\ref{sec::NLTCS::two}
shows great potential of combining the bootstrap CI
and variational inference.
In the NLTCS example that we presented, without adequate tools to obtain uncertainty in estimates, it is possible that an erroneous conclusion could have been made, stating that the proportion of responses that corresponds to profile 3 (mainly problems with managing money, grocery shopping, and traveling) stays the same over the ten year interval, while our analysis demonstrated that the difference is significant. 
Note that the approach of comparing two samples
is very generic -- it can be applied to
various problems involving a comparison of two datasets using variational inference.
With the methodologies developed in this paper, we can
assess the significance of the difference between estimates using the bootstrap
and make a statistical conclusion about the two datasets.




\paragraph{Connection with Stephen E. Fienberg}
Stephen Fienberg had originally introduced Erosheva, then a graduate student, to the Grade of Membership Model~\citep{woodbury1978mathematical} and to the functional disability data from the NLTCS.  Erosheva's graduate work has motivated the original NLTCS publication \citep{erosheva2007describing} as well as the development of the general mixed membership modeling framework.
For that original NLTCS publication, under Fienberg's direction, Erosheva and Joutard have developed and implemented both a fully Bayesian MCMC approach and the corresponding variational estimation algorithm for mixed membership models with binary data \citep{erosheva2007describing}. Later, \citet{wang2015variational}, extended this line of work and developed a variational estimation algorithm for mixed membership models with rank data, where, at a suggestion of a reviewer, they used bootstrap methods to assess uncertainty in the model estimates.

Although Fienberg's impact on statistical science spanned many areas, including the census and survey research in general, he is perhaps best known for his contributions to discrete data analysis and log-linear models. Even though he used to say that ``everything is a log-linear model", meaning that almost any statistical model for discrete data has a log-linear representation, he did not brush off other approaches. In particularly, Stephen Fienberg was a big advocate of mixed membership models -- because of their flexibility and practical appeal -- during the last 15 years of his career. Mixed membership models present certain challenges in estimation, and, while recommending variational inference as a step toward solving those challenges, Fienberg was very much cognizant of both the practical advantages and the lack of statistical theory for variational estimation. We are not aware of his involvement in recent efforts to provide a theoretical foundation for variational estimates, but we can say with confidence that he would have been supportive and encouraging for advancing research in this direction.

\section*{Acknowledgements}

We thank the referee and the editor for insightful comments.
We also thank Fang Han for helpful comments on the theoretical results.

\appendix

\section{Proofs}

\begin{proof}[Proof of Theorem~\ref{thm::fixed}]

Our proof consists of two parts.
In the first part, we show that the asymptotic normality admits a Berry-Essen bound.
In the second part, we show that the bootstrap variant
converges to the same distribution with a Berry-Essen bound.

{\bf Part 1: Berry-Esseen Bound.}
By the derivation of Theorem~2 in \cite{westling2015establishing} and Theorem 5.23 in \cite{van1998asymptotic},
the ELBO estimator has the property that
$$
\hat{\theta}_{ELBO} - \theta_{ELBO} = A(P_0,\theta_{ELBO}) \cdot \frac{1}{n}\sum_{i=1}^n \Psi_\theta(\theta_{ELBO}|X_i) +o_P\left(\frac{1}{\sqrt{n}}\right).
$$
Note that the above equation is a common expression for an $M$-estimator.
Thus, $a^T\Delta_n$ has the following expression:
\begin{align*}
a^T\Delta_n  &= a^T \sqrt{n}(\hat{\theta}_{ELBO} - \theta_{ELBO})\\
&= a^TA(P_0,\theta_{ELBO}) \cdot \frac{1}{\sqrt{n}}\sum_{i=1}^n \Psi_\theta(\theta_{ELBO}|X_i)  + o_P(1)\\
&= \sqrt{n}\cdot\frac{1}{n}\sum_{i=1}^n (W_i-\E(W_i)) + o_P(1),
\end{align*}
where $W_i = a^TA(P_0,\theta_{ELBO})\Psi_\theta(\theta_{ELBO}|X_i)$
and $\E(W_i)=0$
because $\E\left(\Psi_\theta(\theta_{ELBO}|X_1)\right)=0$.

By the assumption that $\E\|\Psi_\theta(\theta_{ELBO}|X_1)\|^3<\infty$ and the Berry-Esseen theorem \citep{berry1941accuracy,esseen1942liapounoff},
we conclude that
$$
\sup_{t}|P(a^T\Delta_n<t) - P(Z_w<t)| \leq C_{BE} \frac{\E\|W_1\|^3}{\sqrt{n}} + o_P(1),
$$
where $Z_w$ is a normal distribution with variance ${\sf Var}(W_1)$
and $C_{BE}$ is a universal constant from the Berry-Esseen theorem.

{\bf Part 2: bootstrap.}
Let $\mathcal{X}_n = \{X_1,\cdots,X_n\}$.
For the bootstrap case, we have a similar decomposition of $\hat{\theta}^*_{ELBO} - \hat{\theta}_{ELBO}$:
$$
\hat{\theta}^*_{ELBO} - \hat{\theta}_{ELBO} = A(\hat{P}_n,\hat{\theta}_{ELBO}) \cdot \frac{1}{n}\sum_{i=1}^n \Psi_\theta(\hat{\theta}_{ELBO}|X^*_i) +E^*_n,
$$
where $\|E^*_n \| = o_P\left(\frac{1}{\sqrt{n}}\right)$ is a small correction error
and $\hat{P}_n$ is the empirical distribution.
Thus,
\begin{align*}
a^T\Delta^*_n  &= a^T \sqrt{n}(\hat{\theta}^*_{ELBO} - \hat{\theta}_{ELBO})\\
&= \sqrt{n}\cdot\frac{1}{n}\sum_{i=1}^n (W^*_i-\E(W^*_i|\mathcal{X}_n)) + o_P(1),
\end{align*}
where $W^*_i = a^TA(\hat{P}_n,\hat{\theta}_{ELBO})\Psi_\theta(\hat{\theta}_{ELBO}|X^*_i)$
and $\E(W^*_i|\mathcal{X}_n)= 0 $.
Again, by applying the Berry-Essen theorem \citep{berry1941accuracy,esseen1942liapounoff},
we conclude that
$$
\sup_{t}|P(a^T\Delta^*_n<t|\mathcal{X}_n) - P(Z^*_w<t|\mathcal{X}_n)| \leq C_{BE} \frac{\hat{\E}_n\|W_1\|^3}{\sqrt{n}}+ o_P(1),
$$
where $Z^*_w$ is a normal distribution with variance ${\sf Var}(W^*_1|\mathcal{X}_n)$
and $\hat{\E}_n\|W_1\|^3 = \frac{1}{n}\sum_{i=1}^n W_i^3$.

By the strong law of large number, $\hat{\E}_n\|W_1\|^3<2 \E\|W_1\|^3$ almost surely.
Because ${\sf Var}(W^*_1|\mathcal{X}_n) -{\sf Var}(W_1) =O_P\left(\frac{1}{\sqrt{n}}\right)$
implies $\sup_t|P(Z^*_w<t|\mathcal{X}_n)-P(Z_w<t)| = O_P\left(\frac{1}{\sqrt{n}}\right)$,
we conclude
$$
\sup_{t}\left|P(a^T\Delta^*_n<t|\mathcal{X}_n) - P(a^T\Delta_n<t)\right| = O_P\left(\frac{1}{\sqrt{n}}\right).
$$
Finally, by choose $a$ to be the unit vector along each coordinate, we obtain the desire result for the
bootstrap CIs.

\end{proof}

\begin{proof}[Proof of Theorem~\ref{thm::increase}]

The high level ideas of this proof is very similar to that of the previous theorem.
In the first part, we derive the Berry-Esseen bound of the ELBO estimator.
In the second part, we prove the bootstrap consistency.
Note that in the increasing-dimensional case,
many smaller-order approximations (e.g., those from a Taylor expansion)
may depend on the dimension $d$
and may no longer be small.
So we need to examine each approximation term.


{\bf Part I: Berry-Esseen Bound.}
Recall that $\Psi(\theta) = \E(\cE(\theta|X_1))$
and $\Psi_\theta,\Psi_{\theta\theta}$ are the gradient and Hessian matrix of $\Psi$.
Let
\begin{align*}
\Psi_n(\theta) &= \frac{1}{n}\sum_{i=1}^n \cE(\theta|X_i)\\
\Psi_{\theta,n}(\theta) &= \frac{1}{n}\sum_{i=1}^n \nabla\cE(\theta|X_i)\\
\Psi_{\theta\theta,n}(\theta) &= \frac{1}{n}\sum_{i=1}^n \nabla\nabla \cE(\theta|X_i)
\end{align*}
denote the corresponding empirical versions.

Because $ 0 = \Psi_{\theta,n}(\hat{\theta}_{ELBO}) = \Psi_{\theta}(\theta_{ELBO})$,
by Taylor's theorem
\begin{align*}
\Psi_{\theta,n}(\theta_{ELBO}) - \Psi_{\theta}(\theta_{ELBO})
&= \Psi_{\theta,n}(\theta_{ELBO}) - \Psi_{\theta,n}(\hat{\theta}_{ELBO})\\
& = \Psi_{\theta\theta,n}(\theta_{ELBO}) (\theta_{ELBO}-\hat{\theta}_{ELBO})  + E_{1,n}\\
& = (\Psi_{\theta\theta}(\theta_{ELBO}) + E_{2,n})(\theta_{ELBO}-\hat{\theta}_{ELBO}) + E_{1,n}\\
& = \Psi_{\theta\theta}(\theta_{ELBO})(\theta_{ELBO}-\hat{\theta}_{ELBO}) \\
&\qquad+ E_{2,n} (\theta_{ELBO}-\hat{\theta}_{ELBO}) + E_{1,n},
\end{align*}
where $E_{1,n}\in \R^d$ is a vector about the second order Taylor approximation error and
$E_{2,n}=\Psi_{\theta\theta,n}(\theta_{ELBO}) - \Psi_{\theta\theta}(\theta_{ELBO})$.

Let $Z_n = \Psi_{\theta,n}(\theta_{ELBO}) - \Psi_{\theta}(\theta_{ELBO})$ denotes
the empirical gradient minus the corresponding expected gradient.
By assumption (A1), $\Psi_{\theta\theta}(\theta_{ELBO})$ is always invertible,
so multiplying $\Omega = \Psi^{-1}_{\theta\theta}(\theta_{ELBO})$ in both sides and rearranging the equation lead to
\begin{align*}
\tilde{\Delta}_n = \hat{\theta}_{ELBO}- \theta_{ELBO} = - \Omega Z_n - \Omega E_{2,n} (\theta_{ELBO}-\hat{\theta}_{ELBO})
-\Omega E_{1,n}.
\end{align*}
The first quantity $ \Omega Z_n$ has an asymptotic normality because it contains an empirical sum minus the corresponding expectation.

To derive the asymptotic normality of $\Delta_n$,
we need
\begin{equation}
\begin{aligned}
\|\sqrt{n}a_n^T\tilde{\Delta}_n &+ \sqrt{n}a_n^T\Omega Z_n\| \\
&= \sqrt{n}\|\underbrace{a_n^T\Omega E_{2,n} (\theta_{ELBO}-\hat{\theta}_{ELBO})}_{(I)}-\underbrace{a_n^T\Omega E_{1,n}}_{(II)}\| = o_P(1)
\end{aligned}
\label{eq::pf::normal}
\end{equation}
for any sequence of unit vectors $a_n\in\R^d$.

For part (I),
because $E_{2,n} = \Psi_{\theta\theta,n}(\theta_{ELBO}) - \Psi_{\theta\theta}(\theta_{ELBO})$
is the average of IID random matrices minus the corresponding expectation, the matrix Bernstein inequality
(see, e.g., Theorem 6.2 in \citealt{tropp2012user})
implies $\|E_{2,n}\|_{2} = O_P\left(\sqrt{\frac{\log^2 d}{n}}\right)$, where $\|\cdot\|_{2}$ is the matrix $2$-norm.
This, along with the fact that
assumption (A1) implies $\|\Omega\|_{2}$ being bounded, implies
\begin{align*}
\sqrt{n}\|a_n^T\Omega E_{2,n} &(\theta_{ELBO}-\hat{\theta}_{ELBO})\|\\ &\leq
\sqrt{n} \|a_n\|\|\Omega\|_{2} \underbrace{\|E_{2,n}\|_{2}}_{=O_P\left(\sqrt{\frac{\log^2 d}{n}}\right)} \underbrace{\|\theta_{ELBO}-\hat{\theta}_{ELBO}\|}_{=O_P\left(\sqrt{\frac{d}{n}}\right)} \\
&= O_P\left(\sqrt{\frac{d\log^2 d}{n}}\right).
\end{align*}
This bounds the contribution of (I).

For part (II), we only need to focus on bounding $\|E_{1,n}\|$ because $\|\Omega\|_2$ is bounded.
By the Taylor's theorem, the $\ell$-th element of $E_{1,n}$ can be written as
$$
E_{1,n,\ell} = (\theta_{ELBO}-\hat{\theta}_{ELBO})^T A_\ell (\theta_{ELBO}-\hat{\theta}_{ELBO})
$$
with
$$
A_\ell = \int_{t=0}^{t=1}  \frac{\partial}{\partial \theta_\ell} \Psi_{\theta\theta,n}\left(\hat{\theta}_{ELBO}+t(\theta_{ELBO}-\hat{\theta}_{ELBO})\right) dt.
$$
Let $\hat{\mu} = \frac{\theta_{ELBO}-\hat{\theta}_{ELBO}}{\|\theta_{ELBO}-\hat{\theta}_{ELBO}\|}$ denote the direction of $\theta_{ELBO}-\hat{\theta}_{ELBO}$,
and $r_n = \|\theta_{ELBO}-\hat{\theta}_{ELBO}\|$,
and $e_\ell\in\R^d$ be the unit vector pointing toward the $\ell$-th coordinate.
Then we can rewrite $E_{1,n,\ell}$ as
$$
E_{1,n,\ell} = r_n^2 \int_{t=0}^{t=1} \nabla_{e_\ell}\nabla_{\hat{\mu}}\nabla_{\hat{\mu}}\Psi_n(\hat{\theta}_{ELBO}+t \cdot r_n\cdot \hat{\mu}_n) dt.
$$
Therefore,
$$
E_{1,n} = r_n^2 \int_{t=0}^{t=1} \nabla\nabla_{\hat{\mu}}\nabla_{\hat{\mu}}\Psi_n(\hat{\theta}_{ELBO}+t\cdot r_n\cdot \hat{\mu}_n) dt
$$
and assumption (A2) implies that
\begin{equation}
\begin{aligned}
\|E_{1,n}\| &= r_n^2 \left\|\int_{t=0}^{t=1} \nabla\nabla_{\hat{\mu}}\nabla_{\hat{\mu}}\Psi_n(\hat{\theta}_{ELBO}+t\cdot r_n\cdot \hat{\mu}_n) dt\right\|\\
&\leq r_n^2 c_1\\
& = O_P\left(\frac{d}{n}\right).
\end{aligned}
\label{eq::E1}
\end{equation}
By assumption (A1), $\|\Omega\|_2$ bounded so
$$
\|\sqrt{n}a_n^T\Omega E_{1,n}\| \leq \sqrt{n}\|\Omega\|_2 \|E_{1,n}\| = O_P\left(\sqrt{\frac{d^2}{n}}\right),
$$
which bounds (II).

As a result, the assumption $\frac{d^2}{n}\rightarrow 0$
implies
$$
\sqrt{n}\|a_n^T\Omega E_{2,n} (\theta_{ELBO}-\hat{\theta}_{ELBO})-a_n^T\Omega E_{1,n}\| = o_P(1)
$$
so equation \eqref{eq::pf::normal} holds.

To obtain the Berry-Esseen bound,
after rearranging equation \eqref{eq::pf::normal},
$$
\sqrt{n}a_n^T\tilde{\Delta}_n = -\sqrt{n}a_n^T\Omega Z_n + o_P(1) = \sqrt{n}\bar{W}_n+o_P(1),
$$
where $\bar{W}_n = \frac{1}{n}\sum_{i=1}^n W_i$
and $W_i = -a_n^T \Omega (\Psi_{\theta}(\theta_{ELBO}|X_i) - \Psi_{\theta}(\theta_{ELBO}))$.
Note that the $W_1,\cdots,W_n$ are also IID.
Thus, by assumption (A3) and the Berry-Esseen theorem \citep{berry1941accuracy,esseen1942liapounoff} we conclude that
\begin{equation}
\begin{aligned}
\sup_{t}\left|P(\sqrt{n}a_n^T\tilde{\Delta}_n<t)-P(\sigma(a_n)Z<t)\right|&= o_P(1)+c_{BE}\frac{\E(|a_n^T W_1|^3)}{\sqrt{n}}\\
& = o_P(1) + o(1),
\end{aligned}
\label{eq::BE_increase}
\end{equation}
where $Z\sim N(0,1)$ and $\sigma^2(a_n) = {\sf Var}(W_1) = a_n^T\Omega {\sf Cov}(\Psi_{\theta}(\theta_{ELBO}|X_1))\Omega a_n$.

{\bf Part II: Bootstrap.}
In the bootstrap world, we are sampling from $\hat{P}_n$.
Thus, all the above derivations hold except that everything is conditional on $X_1,\cdots,X_n$
and the expectation is taken over $\hat{P}_n$ instead of $P$.
So the derivation in part I leads to
\begin{equation}
\begin{aligned}
\sup_{t}|P(\sqrt{n}a_n^T\tilde{\Delta}^*_n<t|\mathcal{X}_n)&-P(\hat{\sigma}_n(a_n)Z<t|\mathcal{X}_n)|\\
&= o_P(1)+c_{BE}\frac{\hat{\E}_n(|a_n^T W_1|^3)}{\sqrt{n}}\\
&= o_P(1)+\frac{c_{BE}}{\sqrt{n}} \cdot \frac{1}{n}\sum_{i=1}^n |a_n^T W_i|^3\\
& \leq o_P(1) +\frac{c_{BE}}{\sqrt{n}} \cdot \frac{1}{n}\sum_{i=1}^n\|W_i\|^3\\
&\leq o_P(1) +\frac{c_{BE}}{\sqrt{n}} \cdot \max\{\|W_1\|^3,\cdots,\|W_n\|^3\}\\
& \leq o_P(1) +  O_P\left(\sqrt{\frac{\log d}{n}} \frac{d^{3/2}}{n^{3/2}}\right) = o_P(1),
\end{aligned}
\label{eq::BT_increase}
\end{equation}
where
\begin{align*}
\hat{\sigma}^2_n(a_n) &= a_n^T\hat{\Omega}_n \hat{{\sf Cov}}_n(\Psi_{\theta}(\hat{\theta}_{ELBO}|X_1))\hat{\Omega}_n a_n\\
\hat{\Omega}_n& = \Psi_{\theta\theta,n}^{-1}(\hat{\theta}_{ELBO})\\
\hat{{\sf Cov}}_n(\Psi_{\theta}(\hat{\theta}_{ELBO}|X_1))& = \sum_{i=1}^n \Psi_{\theta}(\hat{\theta}_{ELBO}|X_i) \Psi_{\theta}(\hat{\theta}_{ELBO}|X_i)^T
\end{align*}
are the empirical versions of $\sigma^2(a_n),\Omega$ and ${\sf Cov}(\Psi_{\theta}(\theta_{ELBO}|X_1))$.

By matrix Bernstein inequality \citep{tropp2012user}, the difference
\begin{align*}
|\hat{\sigma}_n(a_n) &- \sigma(a_n)| \\
&\leq \sup_{a:\|a\|=1} |\hat{\sigma}_n(a) - \sigma(a)|\\
&= \|\hat{\Omega}_n \hat{{\sf Cov}}_n(\Psi_{\theta}(\hat{\theta}_{ELBO}|X_1))\hat{\Omega}_n-\Omega {\sf Cov}(\Psi_{\theta}(\theta_{ELBO}|X_1))\Omega\|_{2}\\
& = O_P\left(\|\hat{\Omega}_n-\Omega\|_{2} + \|\hat{{\sf Cov}}_n(\Psi_{\theta}(\hat{\theta}_{ELBO}|X_1))-{\sf Cov}(\Psi_{\theta}(\theta_{ELBO}|X_1))\|_{2}\right)\\
& = O_P\left(\sqrt{\frac{\log^2 d}{n}}\right) = o_P(1).
\end{align*}
Therefore,
$$
\sup_{t}|P(\P(\sigma(a_n)Z<t)-P(\hat{\sigma}_n(a_n)Z<t|\mathcal{X}_n)| = o_P(1).
$$
This, together with equations \eqref{eq::BE_increase} and \eqref{eq::BT_increase},
implies
$$
\sup_{t}|P(\sqrt{n}a_n^T\tilde{\Delta}^*_n<t|\mathcal{X}_n)-P(\sqrt{n}a_n^T\tilde{\Delta}_n<t))| = o_P(1).
$$
Finally, by choose $a_n$ to be the unit vector along each coordinate, we obtain the desire result for the
bootstrap CIs.

\end{proof}

\section{Assumptions in Westling and McCormick (2015)}	\label{sec::B15}

Here we describe the assumptions (B1--5) in the appendix of \cite{westling2015establishing}. 

\begin{itemize}
\item[(B1)] For all $\theta\in \Theta$ and $P_0$-a.e. $x$, ${\sf ELBO}(\theta,\omega|x)$ is uniquely maximized at $\omega= \omega_{\max}(\theta|x)$, 
which is an element of $\Omega$, an open subset of $\R^s$.
\item[(B2)] $\omega_{\max}(\theta|x)$ is a measurable function of $x$ for all $\theta$ and twice continuously differentiable in a
neighborhood of $\theta_{ELBO}$ for $P_0$-a.e. $x$. 
\item[(B3)] ${\sf ELBO}(\theta,\omega|x)$ is twice continuously differentiable in a neighborhood of $\theta_{ELBO}$ and $\omega_{\max}(\theta_{ELBO}|x)$ for $P_0$-a.e. $x$. 
\item[(B4)] There exists $r_1>0$, $s(x)>0$, $b_1(x)$ and $b_2(x)$ such that 
\begin{enumerate}
\item For all $x\in\R^J$ and $\theta\in B(\theta_{ELBO},r_1)$, 
$$
\omega_{\max}(\theta|x) \in B(\omega_{\max}(\theta_{ELBO}|x), s(x)).
$$
\item For all $x\in\R^J$, $\theta_1,\theta_2\in B(\theta_{ELBO},r_1)$ and $\omega_1,\omega_2\in B(\omega_{\max}(\theta_{ELBO}|x), s(x))$, 
$$
|{\sf ELBO}(\theta_1,\omega_1|x) - {\sf ELBO}(\theta_2,\omega_2|x)|\leq b_1(x) \left(\|\theta_1-\theta_2\|+\|\omega_1-\omega_2\|\right). 
$$
\item For all $\theta_1,\theta_2\in B(\theta_{ELBO}, r_1)$, 
$$
\|\omega_{\max}(\theta_1|x)-\omega_{\max}(\theta_2|x)\|\leq b_2(x) \|\theta_1-\theta_2\|.
$$
\item The functions $b_1,b_2\in L_2(P_0)$. 

\end{enumerate}
\item[(B5)] $|\nabla^2 \cE(\theta|x)|\leq \kappa(x)$ for all $\theta$ in a neighborhood of $\theta_{ELBO}$ and $P_0$-a.e. $x$
for an integrable function $\kappa$.

\end{itemize}

\bibliographystyle{abbrvnat}
\bibliography{VI_paper}

\begin{thebibliography}{54}
\providecommand{\natexlab}[1]{#1}
\providecommand{\url}[1]{\texttt{#1}}
\expandafter\ifx\csname urlstyle\endcsname\relax
  \providecommand{\doi}[1]{doi: #1}\else
  \providecommand{\doi}{doi: \begingroup \urlstyle{rm}\Url}\fi

\bibitem[Airoldi et~al.(2005)Airoldi, Blei, Xing, and
  Fienberg]{airoldi2005latent}
E.~Airoldi, D.~Blei, E.~Xing, and S.~Fienberg.
\newblock A latent mixed membership model for relational data.
\newblock In \emph{Proceedings of the 3rd international workshop on Link
  discovery}, pages 82--89. ACM, 2005.

\bibitem[Airoldi et~al.(2008)Airoldi, Blei, Fienberg, and
  Xing]{airoldi2008mixed}
E.~M. Airoldi, D.~M. Blei, S.~E. Fienberg, and E.~P. Xing.
\newblock Mixed membership stochastic blockmodels.
\newblock \emph{Journal of Machine Learning Research}, 9\penalty0
  (Sep):\penalty0 1981--2014, 2008.

\bibitem[Airoldi et~al.(2014)Airoldi, Blei, Erosheva, and
  Fienberg]{airoldi2014handbook}
E.~M. Airoldi, D.~Blei, E.~A. Erosheva, and S.~E. Fienberg.
\newblock \emph{Handbook of mixed membership models and their applications}.
\newblock CRC Press, 2014.

\bibitem[Andrews(2000)]{andrews2000inconsistency}
D.~W. Andrews.
\newblock Inconsistency of the bootstrap when a parameter is on the boundary of
  the parameter space.
\newblock \emph{Econometrica}, 68\penalty0 (2):\penalty0 399--405, 2000.

\bibitem[Babu and Singh(1983)]{babu1983inference}
G.~J. Babu and K.~Singh.
\newblock Inference on means using the bootstrap.
\newblock \emph{The Annals of Statistics}, 11\penalty0 (3):\penalty0 999--1003,
  1983.

\bibitem[Berry(1941)]{berry1941accuracy}
A.~C. Berry.
\newblock The accuracy of the gaussian approximation to the sum of independent
  variates.
\newblock \emph{Transactions of the American Mathematical Cociety}, 49\penalty0
  (1):\penalty0 122--136, 1941.

\bibitem[Bickel et~al.(2013)Bickel, Choi, Chang, and
  Zhang]{bickel2013asymptotic}
P.~Bickel, D.~Choi, X.~Chang, and H.~Zhang.
\newblock Asymptotic normality of maximum likelihood and its variational
  approximation for stochastic blockmodels.
\newblock \emph{The Annals of Statistics}, 41\penalty0 (4):\penalty0
  1922--1943, 2013.

\bibitem[Blei et~al.(2003)Blei, Ng, and Jordan]{blei2003latent}
D.~M. Blei, A.~Y. Ng, and M.~I. Jordan.
\newblock Latent dirichlet allocation.
\newblock \emph{Journal of machine Learning research}, 3\penalty0
  (Jan):\penalty0 993--1022, 2003.

\bibitem[Blei et~al.(2006)Blei, Jordan, et~al.]{blei2006variational}
D.~M. Blei, M.~I. Jordan, et~al.
\newblock Variational inference for dirichlet process mixtures.
\newblock \emph{Bayesian Analysis}, 1\penalty0 (1):\penalty0 121--143, 2006.

\bibitem[Blei et~al.(2017)Blei, Kucukelbir, and McAuliffe]{blei2017variational}
D.~M. Blei, A.~Kucukelbir, and J.~D. McAuliffe.
\newblock Variational inference: A review for statisticians.
\newblock \emph{Journal of the American Statistical Association}, \penalty0
  (just-accepted), 2017.

\bibitem[Box(1976)]{doi:10.1080/01621459.1976.10480949}
G.~E.~P. Box.
\newblock Science and statistics.
\newblock \emph{Journal of the American Statistical Association}, 71\penalty0
  (356):\penalty0 791--799, 1976.
\newblock \doi{10.1080/01621459.1976.10480949}.
\newblock URL
  \url{http://www.tandfonline.com/doi/abs/10.1080/01621459.1976.10480949}.

\bibitem[Celisse et~al.(2012)Celisse, Daudin, Pierre,
  et~al.]{celisse2012consistency}
A.~Celisse, J.-J. Daudin, L.~Pierre, et~al.
\newblock Consistency of maximum-likelihood and variational estimators in the
  stochastic block model.
\newblock \emph{Electronic Journal of Statistics}, 6:\penalty0 1847--1899,
  2012.

\bibitem[Chernozhukov et~al.(2013)Chernozhukov, Chetverikov, and
  Kato]{chernozhukov2013gaussian}
V.~Chernozhukov, D.~Chetverikov, and K.~Kato.
\newblock Gaussian approximations and multiplier bootstrap for maxima of sums
  of high-dimensional random vectors.
\newblock \emph{The Annals of Statistics}, 41\penalty0 (6):\penalty0
  2786--2819, 2013.

\bibitem[Damianou et~al.(2011)Damianou, Titsias, and
  Lawrence]{damianou2011variational}
A.~Damianou, M.~K. Titsias, and N.~D. Lawrence.
\newblock Variational gaussian process dynamical systems.
\newblock In \emph{Advances in Neural Information Processing Systems}, pages
  2510--2518, 2011.

\bibitem[Damianou et~al.(2014)Damianou, Titsias, and
  Lawrence]{damianou2014variational}
A.~C. Damianou, M.~K. Titsias, and N.~D. Lawrence.
\newblock Variational inference for uncertainty on the inputs of gaussian
  process models.
\newblock \emph{arXiv preprint arXiv:1409.2287}, 2014.

\bibitem[Douglas(1997)]{douglas1997joint}
J.~Douglas.
\newblock Joint consistency of nonparametric item characteristic curve and
  ability estimation.
\newblock \emph{Psychometrika}, 62\penalty0 (1):\penalty0 7--28, 1997.

\bibitem[Efron(1982)]{efron1982jackknife}
B.~Efron.
\newblock \emph{The jackknife, the bootstrap and other resampling plans}.
\newblock SIAM, 1982.

\bibitem[Efron(1992)]{efron1992bootstrap}
B.~Efron.
\newblock Bootstrap methods: another look at the jackknife.
\newblock In \emph{Breakthroughs in Statistics}, pages 569--593. Springer,
  1992.

\bibitem[Erosheva and White(2006)]{erosheva2006operational}
E.~A. Erosheva and T.~White.
\newblock Operational definition of chronic disability in the national long
  term care survey: Problems and suggestions.
\newblock \emph{Department of Statistics, University of Washington, Working
  Paper}, 2006.

\bibitem[Erosheva et~al.(2007)Erosheva, Fienberg, and
  Joutard]{erosheva2007describing}
E.~A. Erosheva, S.~E. Fienberg, and C.~Joutard.
\newblock Describing disability through individual-level mixture models for
  multivariate binary data.
\newblock \emph{The Annals of Applied Statistics}, 1\penalty0 (2):\penalty0
  346, 2007.

\bibitem[Esseen(1942)]{esseen1942liapounoff}
C.-G. Esseen.
\newblock \emph{On the Liapounoff limit of error in the theory of probability}.
\newblock Almqvist \& Wiksell, 1942.

\bibitem[Fan and Zhou(2015)]{fan2015guarding}
J.~Fan and W.-X. Zhou.
\newblock Guarding from spurious discoveries in high dimension.
\newblock \emph{arXiv preprint arXiv:1512.01617}, 2015.

\bibitem[Ghahramani and Beal(2000)]{ghahramani2000variational}
Z.~Ghahramani and M.~J. Beal.
\newblock Variational inference for Bayesian mixtures of factor analysers.
\newblock In \emph{Advances in Neural Information Processing Systems}, pages
  449--455, 2000.

\bibitem[Haberman(1977)]{haberman1977maximum}
S.~J. Haberman.
\newblock Maximum likelihood estimates in exponential response models.
\newblock \emph{The Annals of Statistics}, 5\penalty0 (5):\penalty0 815--841,
  1977.

\bibitem[Hall(2013)]{hall2013bootstrap}
P.~Hall.
\newblock \emph{The bootstrap and Edgeworth expansion}.
\newblock Springer Science \& Business Media, 2013.

\bibitem[Hall et~al.(2011{\natexlab{a}})Hall, Ormerod, and
  Wand]{hall2011theory}
P.~Hall, J.~T. Ormerod, and M.~Wand.
\newblock Theory of gaussian variational approximation for a poisson mixed
  model.
\newblock \emph{Statistica Sinica}, pages 369--389, 2011{\natexlab{a}}.

\bibitem[Hall et~al.(2011{\natexlab{b}})Hall, Pham, Wand, and
  Wang]{hall2011asymptotic}
P.~Hall, T.~Pham, M.~P. Wand, and S.~S. Wang.
\newblock Asymptotic normality and valid inference for gaussian variational
  approximation.
\newblock \emph{The Annals of Statistics}, 39\penalty0 (5):\penalty0
  2502--2532, 2011{\natexlab{b}}.

\bibitem[Horowitz(1997)]{horowitz1997bootstrap}
J.~L. Horowitz.
\newblock Bootstrap methods in econometrics: theory and numerical performance.
\newblock \emph{Econometric Society Monographs}, 28:\penalty0 188--222, 1997.

\bibitem[Jordan et~al.(1999)Jordan, Ghahramani, Jaakkola, and
  Saul]{jordan1999introduction}
M.~I. Jordan, Z.~Ghahramani, T.~S. Jaakkola, and L.~K. Saul.
\newblock An introduction to variational methods for graphical models.
\newblock \emph{Machine Learning}, 37\penalty0 (2):\penalty0 183--233, 1999.

\bibitem[Khan et~al.(2010)Khan, Bouchard, Murphy, and
  Marlin]{khan2010variational}
M.~E. Khan, G.~Bouchard, K.~P. Murphy, and B.~M. Marlin.
\newblock Variational bounds for mixed-data factor analysis.
\newblock In \emph{Advances in Neural Information Processing Systems}, pages
  1108--1116, 2010.

\bibitem[Klami et~al.(2015)Klami, Virtanen, Lepp{\"a}aho, and
  Kaski]{klami2015group}
A.~Klami, S.~Virtanen, E.~Lepp{\"a}aho, and S.~Kaski.
\newblock Group factor analysis.
\newblock \emph{IEEE Transactions on Neural Networks and Learning Systems},
  26\penalty0 (9):\penalty0 2136--2147, 2015.

\bibitem[Latouche et~al.(2012)Latouche, Birmele, and
  Ambroise]{latouche2012variational}
P.~Latouche, E.~Birmele, and C.~Ambroise.
\newblock Variational Bayesian inference and complexity control for stochastic
  block models.
\newblock \emph{Statistical Modelling}, 12\penalty0 (1):\penalty0 93--115,
  2012.

\bibitem[Leeb and P{\"o}tscher(2005)]{leeb2005model}
H.~Leeb and B.~M. P{\"o}tscher.
\newblock Model selection and inference: Facts and fiction.
\newblock \emph{Econometric Theory}, 21\penalty0 (1):\penalty0 21--59, 2005.

\bibitem[Mammen(1989)]{mammen1989asymptotics}
E.~Mammen.
\newblock Asymptotics with increasing dimension for robust regression with
  applications to the bootstrap.
\newblock \emph{The Annals of Statistics}, 17\penalty0 (1):\penalty0 382--400,
  1989.

\bibitem[Mammen(1993)]{mammen1993bootstrap}
E.~Mammen.
\newblock Bootstrap and wild bootstrap for high dimensional linear models.
\newblock \emph{The Annals of Statistics}, 21\penalty0 (1):\penalty0 255--285,
  1993.

\bibitem[Manton et~al.(1993)Manton, Corder, and Stallard]{manton1993estimates}
K.~G. Manton, L.~S. Corder, and E.~Stallard.
\newblock Estimates of change in chronic disability and institutional incidence
  and prevalence rates in the us elderly population from the 1982, 1984, and
  1989 national long term care survey.
\newblock \emph{Journal of Gerontology}, 48\penalty0 (4):\penalty0 S153--S166,
  1993.

\bibitem[Neyman and Scott(1948)]{neyman1948consistent}
J.~Neyman and E.~L. Scott.
\newblock Consistent estimates based on partially consistent observations.
\newblock \emph{Econometrica: Journal of the Econometric Society}, pages 1--32,
  1948.

\bibitem[O'Hagan et~al.(2015)O'Hagan, Murphy, and Gormley]{o2015estimation}
A.~O'Hagan, T.~B. Murphy, and I.~C. Gormley.
\newblock On estimation of parameter uncertainty in model-based clustering.
\newblock \emph{arXiv preprint arXiv:1510.00551}, 2015.

\bibitem[Portnoy(1984)]{portnoy1984asymptotic}
S.~Portnoy.
\newblock Asymptotic behavior of $ m $-estimators of $ p $ regression
  parameters when $ p^ 2/n $ is large. i. consistency.
\newblock \emph{The Annals of Statistics}, 12\penalty0 (4):\penalty0
  1298--1309, 1984.

\bibitem[Portnoy(1985)]{portnoy1985asymptotic}
S.~Portnoy.
\newblock Asymptotic behavior of $ m $ estimators of $ p $ regression
  parameters when $ p^2/n $ is large; ii. normal approximation.
\newblock \emph{The Annals of Statistics}, 13\penalty0 (4):\penalty0
  1403--1417, 1985.

\bibitem[Portnoy(1988)]{portnoy1988asymptotic}
S.~Portnoy.
\newblock Asymptotic behavior of likelihood methods for exponential families
  when the number of parameters tends to infinity.
\newblock \emph{The Annals of Statistics}, 16\penalty0 (1):\penalty0 356--366,
  1988.

\bibitem[Pritchard et~al.(2000)Pritchard, Stephens, and
  Donnelly]{pritchard2000inference}
J.~K. Pritchard, M.~Stephens, and P.~Donnelly.
\newblock Inference of population structure using multilocus genotype data.
\newblock \emph{Genetics}, 155\penalty0 (2):\penalty0 945--959, 2000.

\bibitem[Redner and Walker(1984)]{redner1984mixture}
R.~A. Redner and H.~F. Walker.
\newblock Mixture densities, maximum likelihood and the EM algorithm.
\newblock \emph{SIAM review}, 26\penalty0 (2):\penalty0 195--239, 1984.

\bibitem[Singh(1981)]{singh1981asymptotic}
K.~Singh.
\newblock On the asymptotic accuracy of efron's bootstrap.
\newblock \emph{The Annals of Statistics}, 9\penalty0 (6):\penalty0 1187--1195,
  1981.

\bibitem[Tropp(2012)]{tropp2012user}
J.~A. Tropp.
\newblock User-friendly tail bounds for sums of random matrices.
\newblock \emph{Foundations of Computational Mathematics}, 12\penalty0
  (4):\penalty0 389--434, 2012.

\bibitem[Van~der Vaart(1998)]{van1998asymptotic}
A.~W. Van~der Vaart.
\newblock \emph{Asymptotic statistics}, volume~3.
\newblock Cambridge university press, 1998.

\bibitem[Wainwright et~al.(2008)Wainwright, Jordan,
  et~al.]{wainwright2008graphical}
M.~J. Wainwright, M.~I. Jordan, et~al.
\newblock Graphical models, exponential families, and variational inference.
\newblock \emph{Foundations and Trends{\textregistered} in Machine Learning},
  1\penalty0 (1--2):\penalty0 1--305, 2008.

\bibitem[Wang and Blei(2017)]{wang2017frequentist}
Y.~Wang and D.~M. Blei.
\newblock Frequentist consistency of variational Bayes.
\newblock \emph{arXiv preprint arXiv:1705.03439}, 2017.

\bibitem[Wang and Erosheva(2015)]{wang2015fitting}
Y.~S. Wang and E.~A. Erosheva.
\newblock Fitting mixed membership models using mixedmem.
\newblock 2015.

\bibitem[Wang et~al.(2015)Wang, Matsueda, and Erosheva]{wang2015variational}
Y.~S. Wang, R.~Matsueda, and E.~A. Erosheva.
\newblock A variational EM method for mixed membership models with multivariate
  rank data: an analysis of public policy preferences.
\newblock \emph{arXiv preprint arXiv:1512.08731}, 2015.

\bibitem[Wasserman(2006)]{wasserman2006all}
L.~Wasserman.
\newblock \emph{All of nonparametric statistics}.
\newblock Springer Science+ Business Media, New York, 2006.

\bibitem[Wasserman et~al.(2013)Wasserman, Kolar, and
  Rinaldo]{wasserman2013estimating}
L.~Wasserman, M.~Kolar, and A.~Rinaldo.
\newblock Estimating undirected graphs under weak assumptions.
\newblock \emph{arXiv preprint arXiv:1309.6933}, 2013.

\bibitem[Westling and McCormick(2015)]{westling2015establishing}
T.~Westling and T.~H. McCormick.
\newblock Establishing consistency and improving uncertainty estimates of
  variational inference through m-estimation.
\newblock \emph{arXiv preprint arXiv:1510.08151}, 2015.

\bibitem[Woodbury et~al.(1978)Woodbury, Clive, and
  Garson]{woodbury1978mathematical}
M.~A. Woodbury, J.~Clive, and A.~Garson.
\newblock Mathematical typology: a grade of membership technique for obtaining
  disease definition.
\newblock \emph{Computers and Biomedical Research}, 11\penalty0 (3):\penalty0
  277--298, 1978.

\end{thebibliography}

\end{document}